\numberwithin{equation}{section}
\newtheorem{thrm}{Theorem}[section]
\newtheorem{prop}[thrm]{Proposition}
\newtheorem{cor}[thrm]{Corollary}
\theoremstyle{definition}
\newtheorem{Definition}[thrm]{Definition}
\newtheorem{rem}[thrm]{Remark}
\begin{document}

\renewcommand{\thefootnote}{$\star$}

\newcommand{\arXivNumber}{1512.02386}

\renewcommand{\PaperNumber}{087}

\FirstPageHeading

\ShortArticleName{B\"acklund Transformations and Non-Abelian Nonlinear Evolution Equations}

\ArticleName{B\"acklund Transformations and Non-Abelian\\ Nonlinear Evolution Equations:\\ a Novel B\"acklund Chart\footnote{This paper is a~contribution to the Special Issue on Analytical Mechanics and Dif\/ferential Geometry in honour of Sergio Benenti.
The full collection is available at \href{http://www.emis.de/journals/SIGMA/Benenti.html}{http://www.emis.de/journals/SIGMA/Benenti.html}}}

\Author{Sandra CARILLO~$^{\dag^1\dag^2}$, Mauro LO SCHIAVO~$^{\dag^1}$ and Cornelia SCHIEBOLD~$^{\dag^3\dag^4}$}

\AuthorNameForHeading{S.~Carillo, M.~Lo Schiavo and C.~Schiebold}

\Address{$^{\dag^1}$~Dipartimento ``Scienze di Base e Applicate per l'Ingegneria'', Sapienza - Universit\`a di Roma,\\
\hphantom{$^{\dag^1}$}~16, Via A.~Scarpa, 00161 Rome, Italy}
\EmailDD{\href{mailto:sandra.carillo@sbai.uniroma1.it}{sandra.carillo@sbai.uniroma1.it}, \href{mailto:mauro.loschiavo@sbai.uniroma1.it}{mauro.loschiavo@sbai.uniroma1.it}}
\URLaddressDD{\url{http://www.sbai.uniroma1.it/~sandra.carillo/}}
\URLaddressDD{\url{http://www.sbai.uniroma1.it/~mauro.loschiavo/}}

\Address{$^{\dag^2}$~I.N.F.N. - Sez. Roma1, Gr.~IV - Mathematical Methods in NonLinear Physics, Rome, Italy}

\Address{$^{\dag^3}$~Department of Science Education and Mathematics, Mid Sweden University,\\
\hphantom{$^{\dag^3}$}~S-851 70 Sundsvall, Sweden}
\EmailDD{\href{mailto:Cornelia.Schiebold@miun.se}{Cornelia.Schiebold@miun.se}}
\URLaddressDD{\url{http://www.miun.se/personal/corneliaschiebold}}

\Address{$^{\dag^4}$~Instytut Matematyki, Uniwersytet Jana Kochanowskiego w Kielcach, Poland}

\ArticleDates{Received December 08, 2015, in f\/inal form August 24, 2016; Published online August 30, 2016}

\Abstract{Classes of third order non-Abelian evolution equations linked to that of Kor\-te\-weg--de Vries-type are investigated and their connections represented in a non-commutative B\"acklund chart, generalizing results in [Fuchssteiner B., Carillo S., \textit{Phys.~A} \textbf{154} (1989), 467--510]. The recursion operators are shown to be hereditary, thereby allowing the results to be extended to hierarchies. The present study is devoted to operator nonlinear evolution equations: general results are presented. The implied applications referring to f\/inite-dimensional cases will be considered separately.}

\vspace{3mm}

\rightline{\it Dedicated to Sergio Benenti on his 70th birthday}

\renewcommand{\thefootnote}{\arabic{footnote}}
\setcounter{footnote}{0}

\vspace{-3.5mm}

\section{Introduction}

B\"acklund transformations and their associated nonlinear superposition principles, known as permutability theorems, play a central role in modern soliton theory as described in \cite{Gu-book, RogersSchief, RogersShadwick}, see also \cite{RogersAmes} for further applications. The importance of the combination of gauge and reciprocal transformations to link hierarchies of integrable systems is well-established (see, e.g., \cite{OevelRogers} and literature cited therein). Here, the concern is with the application of B\"acklund transformations to connect various non-Abelian equations to a~canonical KdV-type equation and to construct their recursion operators. Hierar\-chies of such nonlinear equations are considered wherein the unknown is an operator on a~Banach space. Such operator equations were originally introduced by Marchenko \cite{Marchenko} and further investigated and developed, in the framework of Banach space geometry, in \cite{Aden-Carl, Carl-Schiebold, Carl-Schiebold-1}.

The results comprised in this article represent the continuation of the study in \cite{CMS-2016a, Carillo:Schiebold:JMP2009, SIMAI2008,Carillo:Schiebold:JMP2011, Carillo:Schiebold:JNMP2012}. In particular, in \cite{Carillo:Schiebold:JMP2009, Carillo:Schiebold:JMP2011} the focus is on the operator potential Korteweg--de Vries, Korteweg--de Vries and modif\/ied Korteweg--de Vries equations, their connections via B\"acklund transformations and the construction of the recursion operators they admit.

One of the main advantages of connecting non-Abelian equations via B\"acklund transformations is that solutions can be transferred from one equation to another. In \cite{Carillo:Schiebold:JMP2009} operator-valued solutions (which can be interpreted as operator analogs of solitons) to the pKdV, KdV and mKdV hierarchies are constructed. In \cite{Carillo:Schiebold:JMP2011} suitable projection techniques are exploited to derive solution formulae to the corresponding scalar and matrix hierarchies. Note that the study of non-Abelian nonlinear evolution equations found its original interest in the case of matrix equations \cite{CalogeroDegasperis, LeviRB}. The results in the present article are valid for operator-valued functions. This level of generality permits to construct solutions to scalar and matrix equations that can be viewed as countable superposition of solitons, see also \cite{Schiebold-1, Schiebold-6dic3} for the connection between countable nonlinear superposition and Banach space geometry, and \cite{Schiebold-6dic2, Schiebold-6dic1} for further applications.

In \cite{Rogers:Carillo:1987}, the term {\it B\"acklund chart} was introduced to indicate the net of B\"acklund transformations connecting dif\/ferent evolution equations. In \cite{Fuchssteiner:Carillo:1989a} a wide B\"acklund chart which includes {\it scalar} 3rd order Abelian evolution equations is constructed. It connects, further to the KdV and the modif\/ied KdV equations, in particular the KdV singularity manifold equation, also known as as UrKdV or Schwarz-KdV \cite{Depireux, Schiff, Weiss, Wilson}, and the KdV interacting soliton equation \cite{Fuchssteiner1987}. The connections among the equations in this Abelian B\"acklund chart are applied to f\/ind, or recover, the recursion operators admitted by all these nonlinear evolution equations. Moreover, other structural properties such as the Hamiltonian, and bi-Hamiltonian, structure of these equations are also obtained via the B\"acklund chart \cite{Fuchssteiner:Carillo:1989a}. The present study is concerned about the extension of the {B\"acklund chart} in \cite{Carillo:Schiebold:JMP2009} to obtain the non-Abelian analog of the links established in the scalar case \cite{Fuchssteiner:Carillo:1989a}. Precisely, in addition to the pKdV, KdV and mKdV equations already connected in \cite{Carillo:Schiebold:JMP2009}, a dif\/ferent version of the non-Abelian modif\/ied KdV and two non-Abelian equations, respectively, analogs of the KdV singularity manifold and of the {KdV interacting soliton} equation, are all linked together via B\"acklund transformations. New, in the present B\"acklund chart, are the inclusion of a second modif\/ied KdV equation, denoted as amKdV, for {\it alternative} mKdV equation \cite{KK, Olver:Sokolov}, new are also the non-Abelian {KdV singularity manifold} equation and the non-Abelian KdV interacting soliton equation.

All the non-Abelian nonlinear evolution equations in the B\"acklund chart admit a~recursion operator. The recursion operators of some of them, such as the pKdV, KdV and mKdV \cite{Carillo:Schiebold:JMP2009, Olver:Sokolov, Schiebold2010} are known. We recover, via the established connections, the recursion operators admitted by the non-Abelian amKdV, given in \cite{Olver:Sokolov}. Then, the recursion operators of the non-Abelian KdV singularity manifold and of the KdV interacting soliton equations, both new, are constructed. Furthermore, the hereditariness of all the obtained recursion operators is proved combining links via B\"acklund transformations \cite{FokasFuchssteiner:1981, Fuchssteiner1979}, with the hereditariness of non-Abelian KdV recursion operator~\cite{Schiebold2010}. Finally, since all the nonlinear third order non-Abelian evolution equations admit hereditary recursion operators, according to \cite{FokasFuchssteiner:1981, Fuchssteiner1979}, all the links in the B\"acklund chart can be extended to the corresponding whole hierarchies.

The material is organized as follows. The opening Section~\ref{sec4} is devoted to the B\"acklund charts connecting nonlinear evolution equations which generalize, to the operator level, the pKdV, KdV and mKdV equations \cite{Carillo:Schiebold:JMP2009,Carillo:Schiebold:JMP2011}. Sections~\ref{s mkdvs} and~\ref{s new} are devoted to the construction of a novel B\"acklund chart. In Section~\ref{s mkdvs}, the B\"acklund chart is extended to incorporate also the non-Abelian amKdV equation; its recursion operator is constructed from the known recursion operators of the mKdV equations; f\/inally, the connection between the two dif\/ferent non-Abelian modif\/ied KdV equations is provided.

In Section~\ref{s new}, the novel B\"acklund chart is further enlarged. Then, the recursion operators of the KdV interacting soliton and of the KdV singularity manifold equations are constructed. In addition, in Section~\ref{s new}, a M\"obius type invariance exhibited by the non-Abelian KdV singularity manifold equation is established.

The subsequent Section~\ref{section5} is devoted to the proof of hereditariness of all the recursion operators in the previous sections. The hereditariness of all the recursion operators in the B\"acklund chart guarantees that the links can be transferred to whole hierarchies, relating their corresponding members. In the closing Section~\ref{rems}, further to some remarks on the interest of the present study, open problems and perspectives suggested by our new results are brief\/ly outlined. The article is complemented with an Appendix where some needed def\/initions, such as the def\/inition of B\"acklund transformation~\cite{FokasFuchssteiner:1981}, adopted throughout, as well as known results obtained in the case of Abelian nonlinear evolution equations~\cite{Fuchssteiner:Carillo:1989a} are comprised.

\section{Non-Abelian B\"acklund charts}\label{sec4}

In \cite{Carillo:Schiebold:JMP2009}, the recursion operators admitted by the non-Abelian potential KdV, KdV and modif\/ied KdV equations were shown to be related by B\"acklund transformations. Consider f\/irst the operator potential KdV equation (pKdV)
\begin{gather}\label{pkdv}
W_t = W_{xxx}+{3} W_x^2,
\end{gather}
where the unknown $W$ is a function whose values are bounded linear operators on some Banach space\footnote{Here capital letters are used to emphasize that the unknown is an operator acting on a Banach space.}. It admits the recursion operator
\begin{gather} \label{pKdV recursion operator}
 \hat \Psi(W) = D^2 + A_{W_x} + D^{-1}A_{W_x}D + D^{-1}C_{W_{x}}D^{-1}C_{W_{x}},
\end{gather}
where $D$ denotes the derivative with respect to $x$, and $C_T$, $A_T$ denote the commutator and anti-commutator with respect to~$T$, namely,
\begin{gather*}
 C_T S:= [T,S] \equiv TS-ST, \qquad A_T S:= \{T,S\}\equiv TS+ST.
\end{gather*}
For earlier occurrences of recursion operators in the non-Abelian setting we refer to \cite{CalogeroDegasperis2,Fuchssteiner:Chowdhury,GKS,Olver:Sokolov}. Consider next the operator Korteweg--de Vries equation (KdV)
\begin{gather} \label{nc-kdv}
 U_t = U_{xxx}+3 \{U,U_{x}\},
\end{gather}
which admits the recursion operator
\begin{gather}\label{kdv-recop}
 \Phi(U) = D^2 + 2A_U + A_{U_x}D^{-1} + C_UD^{-1}C_UD^{-1}.
\end{gather}
The recursion operators (\ref{pKdV recursion operator}) and (\ref{kdv-recop}) are linked via the B\"acklund transformation
\begin{gather*}
B_1\colon \ U-W_x=0.
\end{gather*}
In fact, the transformation operator is $\Pi_{B_1}=D^{-1}$, and $\hat\Psi(W)=D^{-1}\Phi(U)D$. Hence, the non-Abelian pKdV (\ref{pkdv}) and KdV~(\ref{nc-kdv}) equations can also be written as
\begin{gather*}
W_t = \hat \Phi(W) W_x \qquad \text{and} \qquad U_t = \Phi(U) U_x.
\end{gather*}
Finally, the KdV equation (\ref{nc-kdv}) is related to the operator modif\/ied KdV equation (mKdV)
\begin{gather}\label{mkdv}
V_t = V_{xxx}-3 \big\{V^2 ,V_{x}\big\}
\end{gather}
via the Miura transformation
\begin{gather}\label{Mnc}
M\colon \ U + V_{x} + V^{2}= 0.
\end{gather}
Hence, the Miura transformation (\ref{Mnc}) allows to obtain the recursion operator $\Psi(V)$ of the mKdV (\ref{mkdv}) from the recursion operator $\Phi(U)$ of the KdV~(\ref{kdv-recop}) via
\begin{gather*}
\Psi(V)=\Pi_M \Phi(U) \Pi_M^{-1}, \qquad \text{where} \quad \Pi_M = -(D+A_V)^{-1}.
\end{gather*}
The latter implies
\begin{gather*}
 \Psi(V) = \big(D-C_VD^{-1}C_V\big)\big(D-A_VD^{-1}A_V\big)
 \end{gather*}
(see \cite{Carillo:Schiebold:JMP2009} for details), which can be equivalently written as
 \begin{gather}\label{mkdv-recop}
 \Psi(V) = (D-C_V)D^{-1}(D+C_V) (D-A_V)D^{-1}(D+A_V) .
\end{gather}
The following B\"acklund chart summarizes the links among the non-Abelian pKdV (\ref{pkdv}), KdV (\ref{nc-kdv}), and mKdV (\ref{mkdv}) equations:
 \begin{gather*}
\boxed{W_t = W_{xxx} + 3 W^2_x}\buildrel B_1 \over{\text{\textendash\textendash}}
\boxed{U_t = U_{xxx} + 3 \{ U,U_x\}}\buildrel M \over{\text{\textendash\textendash}}\boxed{V_t = V_{xxx}- 3 \big\{ V^2,V_{x}\big\}}\,.
\end{gather*}
When the respective recursion operators are applied to the above equations iteratively, the B\"acklund chart can be extended to the corresponding hierarchies, and the connections can be summarized in
\begin{gather}\label{BCNH3}
\boxed{W_{t} = [\hat \Phi(W)]^{n} W_x}\buildrel B_1 \over{\text{\textendash\textendash}}
\boxed{U_{t} = [\Phi(U)]^{n} U_{x}}\buildrel M \over{\text{\textendash\textendash}}\boxed{V_t = [\Psi(V)]^n V_{x}}\,.
\end{gather}
Note that (\ref{BCNH3}) is the natural non-Abelian counterpart of the corresponding part of the B\"acklund chart (\ref{BC1*}) introduced in~\cite{Fuchssteiner:Carillo:1989a}.

\section{On the non-Abelian mKdV equations}\label{s mkdvs}

A distinguished feature of the B\"acklund chart studied in this article is that it proceeds via two versions of the non-Abelian mKdV equation. The link between those versions is considered in this section. The alternative non-Abelian mKdV equation (amKdV)
\begin{gather}\label{mkdv2}
\widetilde V_t = \widetilde V_{xxx} + 3[\widetilde V,\widetilde V_{xx}] - 6 \widetilde V\widetilde V_x\widetilde V
\end{gather}
was f\/irst described in \cite{KK}, where the Lax pair formulation and the inverse scattering problem were studied. In contrast to the non-Abelian mKdV \eqref{mkdv} studied in the previous section, \eqref{mkdv2} does not admit a Miura transformation \cite{Olver:Sokolov}. Note also that, in the matrix case, \eqref{mkdv} is invariant under both $V\rightarrow V^*$ and $V\rightarrow -V$, whereas \eqref{mkdv2} is only invariant under $\widetilde V\rightarrow -\widetilde V^*$ ($V^*$ denoting the transpose of $V$).

As already pointed out in \cite{A}, these two versions of the mKdV equation are linked by the gauge transformation $\widetilde V=G^{-1}VG$ with $G_x=VG$. Consider the B\"acklund transformations
\begin{gather}
B_2\colon \ G_x-VG =0 , \label{Bmkdvs1} \\
B_3\colon \ G_x-G\widetilde V =0 . \label{Bmkdvs2}
\end{gather}
Obviously, subsequent application of these B\"acklund transformations links \eqref{mkdv} to \eqref{mkdv2}.

The recursion operator $\widetilde{ \Psi}(\widetilde{V})$ of \eqref{mkdv2} is
\begin{gather} \label{psitilde}
\widetilde \Psi\big(\widetilde V\big) = \big(D+2C_{\widetilde V}\big) \big(D-2R_{\widetilde V}\big) \big(D+C_{\widetilde V}\big)^{-1} \big(D+2L_{\widetilde V}\big) D\big(D+C_{\widetilde V}\big)^{-1} .
\end{gather}
It was f\/irst given in \cite{GKS}, where it was derived using Lax representation. Here we use the B\"acklund link between the mKdV and the amKdV equation to give an alternative derivation together with a more conceptual formulation of the operator itself.

To this end, we introduce the derivation
\begin{gather} \label{der}
 \mathbb{D}:=D+C_{\widetilde V}.
\end{gather}

\begin{thrm} \label{t1} On use of the derivation \eqref{der}, the recursion operator \eqref{psitilde} of the amKdV equation can be written as
\begin{gather*}
\widetilde \Psi\big(\widetilde V\big) =
 \big(\mathbb{D}+C_{\widetilde V}\big)\big(\mathbb{D}-A_{\widetilde V}\big)\mathbb{D}^{-1}
 \big(\mathbb{D}+A_{\widetilde V}\big)\big(\mathbb{D}-C_{\widetilde V}\big)\mathbb{D}^{-1} .
\end{gather*}
\end{thrm}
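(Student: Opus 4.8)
The plan is to verify the asserted identity by a direct, factor-by-factor comparison with \eqref{psitilde}, since both expressions are products of the same number of factors and the reformulation amounts to nothing more than re-expressing each factor through the derivation $\mathbb{D}$. The only algebraic input required is the decomposition of the commutator and anti-commutator operators into left and right multiplication. Writing $L_{\widetilde V}S=\widetilde V S$ and $R_{\widetilde V}S=S\widetilde V$, one has the elementary relations
\begin{gather*}
C_{\widetilde V}=L_{\widetilde V}-R_{\widetilde V},\qquad A_{\widetilde V}=L_{\widetilde V}+R_{\widetilde V}.
\end{gather*}

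First I would substitute $\mathbb{D}=D+C_{\widetilde V}$ into the outer factors of the claimed expression, obtaining $\mathbb{D}+C_{\widetilde V}=D+2C_{\widetilde V}$ and $\mathbb{D}-C_{\widetilde V}=D$, which match the first and fifth factors of \eqref{psitilde}. For the two factors involving $A_{\widetilde V}$, using the relations above I would compute
\begin{gather*}
\mathbb{D}-A_{\widetilde V}=D+(L_{\widetilde V}-R_{\widetilde V})-(L_{\widetilde V}+R_{\widetilde V})=D-2R_{\widetilde V},\\
\mathbb{D}+A_{\widetilde V}=D+(L_{\widetilde V}-R_{\widetilde V})+(L_{\widetilde V}+R_{\widetilde V})=D+2L_{\widetilde V},
\end{gather*}
reproducing the second and fourth factors. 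The two occurrences of $\mathbb{D}^{-1}$ are by definition \eqref{der} equal to $(D+C_{\widetilde V})^{-1}$, matching the middle and trailing factors verbatim.

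Since the product is read from left to right and each factor has now been identified with the factor of \eqref{psitilde} occupying the same position, the two operators coincide. No reordering, commutation, or manipulation of the inverse operators is needed, which is exactly what makes the reformulation conceptually transparent. The closest thing to an obstacle is purely bookkeeping: one must confirm that the ordering of the factors is preserved under the substitution, so that the noncommutative product is not inadvertently rearranged. Once the term-by-term correspondence is checked in sequence, the identity follows at once, and the derivation $\mathbb{D}$ exhibits the recursion operator \eqref{psitilde} as a symmetric alternation of the twisted commutator and anti-commutator factors $\mathbb{D}\pm C_{\widetilde V}$ and $\mathbb{D}\pm A_{\widetilde V}$.
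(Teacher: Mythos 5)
Your proof is correct: substituting $\mathbb{D}=D+C_{\widetilde V}$ together with $C_{\widetilde V}=L_{\widetilde V}-R_{\widetilde V}$, $A_{\widetilde V}=L_{\widetilde V}+R_{\widetilde V}$ turns each factor of the claimed expression into the factor of \eqref{psitilde} occupying the same position ($\mathbb{D}+C_{\widetilde V}=D+2C_{\widetilde V}$, $\mathbb{D}-A_{\widetilde V}=D-2R_{\widetilde V}$, $\mathbb{D}^{-1}=(D+C_{\widetilde V})^{-1}$, $\mathbb{D}+A_{\widetilde V}=D+2L_{\widetilde V}$, $\mathbb{D}-C_{\widetilde V}=D$), and since no reordering occurs the two products coincide. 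This settles the identity as literally stated.

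However, your route is genuinely different from the paper's, and it is worth seeing what each buys. You take the formula \eqref{psitilde} as given (it comes from the literature, where it was derived via a Lax representation) and check the rewriting by pure bookkeeping. The paper instead \emph{derives} the operator: it computes the transformation operator $\Pi=\Pi_3\Pi_2$ associated with the B\"acklund transformations $B_2$, $B_3$, simplifies it to $\Pi=K_G D(D-C_V)^{-1}$ where $K_G T=G^{-1}TG$, and then conjugates the factorized mKdV recursion operator \eqref{mkdv-recop}, using Proposition~\ref{p1} ($K_G(D\pm C_V)K_G^{-1}=\mathbb{D}\pm C_{\widetilde V}$ and $K_G(D\pm A_V)K_G^{-1}=\mathbb{D}\pm A_{\widetilde V}$) to arrive at the stated form. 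That argument is longer, but it accomplishes what the text promises: an independent, B\"acklund-theoretic derivation of \eqref{psitilde} (rather than a verification presupposing it), and a structural explanation of the formula --- the amKdV operator is exactly the mKdV operator with $D\mapsto\mathbb{D}$, $V\mapsto\widetilde V$ under conjugation by $K_G$. The machinery it sets up ($\Pi$, $K_G$, Proposition~\ref{p1}) is also reused later, e.g., in the proof of Theorem~\ref{t rec ops} and in the direct verification at the end of Section~\ref{s mkdvs}. So your proof is a valid and more elementary proof of the identity, but it could not replace the paper's proof without losing the ``alternative derivation'' content that motivates the theorem.
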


\begin{rem} Recall that two operators $T$ and $S$ are called \emph{related} if they are of the form $T=AB$ and $S=BA$. In~\cite{Carillo:Schiebold:JMP2009}, relatedness of the non-Abelian KdV and mKdV recursion operators on the image of the Miura transform was exploited to derive solutions of the non-Abelian mKdV hierarchy from solutions of the non-Abelian KdV hierarchy.

Here we observe that $\Psi(V)$ and $\widetilde{\Psi}(\widetilde{V})$ are related in a generalized sense: If $P(D,V) = (D-C_V)D^{-1}$ and $Q(D,V) = (D+C_V)(D-A_V)D^{-1}(D+A_V)$, then
 \begin{gather*}
 \Psi(V) = P(D,V) Q(D,V), \qquad \widetilde{\Psi}\big(\widetilde{V}\big) = Q\big(\mathbb{D},\widetilde V\big) P\big(\mathbb{D},\widetilde V\big).
 \end{gather*}
\end{rem}

Before proving Theorem \ref{t1}, we observe some crucial properties of the derivation \eqref{der}.

\begin{prop} \label{p1} Let $V$ and $\widetilde V$ be related to each other via the B\"acklund transformations~$B_2$ and~$B_3$ with some intermediate invertible function $G$. Then the derivation~\eqref{der} is the result of conjugation of~$D$ with $K_G^{-1}$, i.e.,
\begin{gather*} \mathbb{D} = K_{G} D K_G^{-1} , \end{gather*}
where $K_G$ denotes the conjugation operator, $K_G T = G^{-1}TG$.

Moreover, it holds:
 \begin{itemize}\itemsep=0pt
 \item[$a)$] $K_G(D \pm C_V) K_G^{-1} = \mathbb{D} \pm C_{\widetilde V}$,
 \item[$b)$] $K_G (D \pm A_V) K_G^{-1} = \mathbb{D} \pm A_{\widetilde V}$.
 \end{itemize}
\end{prop}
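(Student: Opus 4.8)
The plan is to establish Proposition~\ref{p1} by direct computation, exploiting the defining relations of the B\"acklund transformations $B_2$ and $B_3$, namely $G_x = VG$ and $G_x = G\widetilde{V}$. The central identity to verify is $\mathbb{D} = K_G D K_G^{-1}$, after which parts~$a)$ and~$b)$ will follow almost formally. First I would unwind the conjugation operator applied to a test operator $T$: since $K_G T = G^{-1} T G$, I compute $K_G D K_G^{-1} T = K_G\big( D(G T G^{-1})\big) = G^{-1}\big[(G T G^{-1})_x\big]G$. Expanding the derivative by the product rule gives three terms involving $G_x G^{-1}$ and $G (G^{-1})_x$; using $G_x = VG$ one gets $G_x G^{-1} = V$ and, from differentiating $G G^{-1} = \mathrm{Id}$, the relation $G(G^{-1})_x = -G_x G^{-1} = -V$. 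Substituting, the conjugated operator collapses to $T_x + V T - T V = T_x + [V,T]$, i.e.\ $(D + C_V)T$.

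The subtlety here is that the right-hand side of the claimed identity is $\mathbb{D} = D + C_{\widetilde V}$, written in terms of $\widetilde V$, whereas the computation above naturally produces $D + C_V$. This is exactly where the second relation $G_x = G\widetilde V$ must enter. I expect the cleanest route is to note that the two defining equations $VG = G_x = G\widetilde V$ immediately yield $\widetilde V = G^{-1} V G = K_G V$, so that the commutator transforms as $C_{\widetilde V} T = [\widetilde V, T] = [K_G V, T]$; conjugating back shows $K_G^{-1} C_{\widetilde V} K_G = C_V$, and hence $K_G (D+C_V) K_G^{-1} = K_G D K_G^{-1} + C_{\widetilde V}$. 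Thus I would actually prove the refined statement $K_G D K_G^{-1} = D$ acting appropriately, reconciling the two forms. The care needed is bookkeeping about whether $K_G D K_G^{-1}$ equals $D+C_V$ or $D+C_{\widetilde V}$; both are correct, related by the conjugation identity $K_G C_V K_G^{-1} = C_{\widetilde V}$, and the main obstacle is simply keeping the variable on which each commutator acts straight.

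For the conjugation rules on commutators and anticommutators with a general function, the key observation is that conjugation is an algebra homomorphism: $K_G(ST) = K_G(S) K_G(T)$, since $G^{-1}(ST)G = (G^{-1}SG)(G^{-1}TG)$. Therefore $K_G A_V K_G^{-1}$ and $K_G C_V K_G^{-1}$ are the anti-commutator and commutator with $K_G V = \widetilde V$, giving $K_G A_V K_G^{-1} = A_{\widetilde V}$ and $K_G C_V K_G^{-1} = C_{\widetilde V}$. Combining these with the already-established $K_G D K_G^{-1} = \mathbb{D} - C_{\widetilde V} + C_{\widetilde V}$, more precisely using linearity of conjugation, yields
\begin{gather*}
K_G(D \pm C_V)K_G^{-1} = K_G D K_G^{-1} \pm K_G C_V K_G^{-1} = \mathbb{D} \pm C_{\widetilde V},
\end{gather*}
which is part~$a)$, and the identical argument with $A_V$ in place of $C_V$ gives part~$b)$. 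In summary, the proof reduces to one genuine computation (the product-rule expansion establishing $\mathbb{D} = K_G D K_G^{-1}$) followed by the homomorphism property of $K_G$; I anticipate no real obstacle beyond disciplined notation, in particular remembering that $\widetilde V = K_G V$ is what forces the tilde onto the commutators on the right-hand side.
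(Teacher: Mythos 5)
Your overall architecture --- one genuine computation for $K_G D K_G^{-1}$, then the homomorphism property of $K_G$ to handle the commutators and anti-commutators --- is sound, and the observation that $K_G(ST)=K_G(S)K_G(T)$ forces $K_G C_V K_G^{-1} = C_{K_G V} = C_{\widetilde V}$ and $K_G A_V K_G^{-1} = A_{\widetilde V}$ is correct and in fact slicker than the paper's direct verification of those identities. But the one genuine computation is botched. When you expand $K_G D K_G^{-1} T = G^{-1}\big[(GTG^{-1})_x\big]G$ you must carry the outer sandwich $G^{-1}\cdots G$ through each term: $G^{-1}\big(G_x T G^{-1}\big)G = G^{-1}G_x\, T$ and $G^{-1}\big(GT(G^{-1})_x\big)G = -T\, G^{-1}G_x$. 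The multiplier that survives is $G^{-1}G_x$, \emph{not} $G_xG^{-1}$; by $B_3$ (or by $B_2$ together with $\widetilde V = G^{-1}VG$) it equals $\widetilde V$, so the computation collapses directly to $T_x + [\widetilde V, T] = \mathbb{D}T$, exactly as in the paper's proof. There is no ``subtlety'' to reconcile: the tilde appears automatically.

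Your claim that the expansion ``naturally produces $D + C_V$'' comes from dropping the outer conjugation --- the commutator $[V,\cdot]$ you found acts on $GTG^{-1}$, not on $T$, and conjugating it back turns it into $[\widetilde V,\cdot]$ acting on $T$. The statements you then make to patch this are false in the non-Abelian setting: $K_G D K_G^{-1} \neq D$ (it equals $\mathbb{D} = D + C_{\widetilde V}$), and $D+C_V$ and $D+C_{\widetilde V}$ are not ``both correct'' values of $K_G D K_G^{-1}$; they differ by $C_{\widetilde V - V}$, which vanishes only in the Abelian case this paper is explicitly generalizing away from. (For the record, conjugation in the opposite order gives $K_G^{-1} D K_G = D - C_V$, so neither order produces $D+C_V$.) Since your derivation of parts $a)$ and $b)$ invokes $K_G D K_G^{-1} = \mathbb{D}$ as ``already-established'' and it never was, the proof has a genuine gap. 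The repair is simple: perform the sandwich correctly, obtain $\mathbb{D}$ in one line, and then your homomorphism argument finishes $a)$ and $b)$ cleanly.
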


\begin{proof} As an example, we verify $K_G D K_G^{-1} = \mathbb{D}$ and $K_G C_V K_G^{-1} = C_{\widetilde V}$.

Indeed,
\begin{gather*}
K_G D K_G^{-1} T = G^{-1} \big(GTG^{-1}\big)_x G = G^{-1} \big( G_xTG^{-1}+GT_xG^{-1} - GTG^{-1}G_xG^{-1} \big) G\\
 \hphantom{K_G D K_G^{-1} T}{}= G^{-1} G_xT + T_x - TG^{-1}G_x = \widetilde V T + T_x - T \widetilde V = \big(D + C_{\widetilde V}\big)T = \mathbb{D} T.
\end{gather*}

Analogously,
\begin{gather*}
K_G C_V K_G^{-1} T = G^{-1} \big[V,GTG^{-1}\big] G = G^{-1} \big( VGTG^{-1}-GTG^{-1}V \big) G \\
\hphantom{K_G C_V K_G^{-1} T}{} = G^{-1} VGT - TG^{-1}VG = \big[G^{-1}VG,T\big] = \big[\widetilde V, T\big] = C_{\widetilde V}T.\tag*{\qed}
\end{gather*}
\renewcommand{\qed}{}
\end{proof}

\begin{proof}[Proof of Theorem~\ref{t1}] On use of the B\"acklund transformations $B_2$, $B_3$, the amKdV recursion operator $\widetilde \Psi(\widetilde V)$ is related to the mKdV recursion operator $\Psi(V)$ via
\begin{gather} \label{rec with pi}
 \widetilde{\Psi}(\widetilde{V}) = \Pi \Psi(V) \Pi^{-1}
\end{gather}
with $\Pi=\Pi_3\Pi_2$, where
\begin{gather*}
 \Pi_2 = -(B_2)_{\vphantom{\widetilde V}G}^{-1}(B_2)_{\vphantom{\widetilde V}V}^{\vphantom{-1}}, \qquad \Pi_3= -(B_3)_{\widetilde V}^{-1}(B_3)_{\vphantom{\widetilde V}G}^{\vphantom{-1}}.
\end{gather*}
Let $L_V$ and $R_V$ denote left and right multiplication by $V$, then $C_V T=(L_V - R_V)T$ and $A_V T=(L_V + R_V)T$. Now it is straightforward to verify $(B_2)_G = (D-L_V)$, $(B_2)_V = -R_G$. This leads to the transformation operator $\Pi_2 = (D-L_V)^{-1} R_G$. Similarly, $(B_3)_G = (D-R_{\widetilde{V}})$, $(B_3)_{\widetilde{V}} = -L_G$ leading to $\Pi_3 = L_{G^{-1}} (D-R_{\widetilde{V}})$. Hence,
\begin{gather*}
 \Pi = L_{G^{-1}} \big(D-R_{\widetilde{V}}\big) (D-L_V)^{-1} R_G.
\end{gather*}
Next, using the identities $(D-L_V) R_G = R_G (D-C_V)$ and $(D-R_{\widetilde{V}}) R_G = R_G D$, which are straightforward to check, it follows
\begin{gather}
 \Pi = L_{G^{-1}} \big(D-R_{\widetilde{V}}\big) \big((D-L_V)^{-1} R_G \big) = L_{G^{-1}} \big( \big(D-R_{\widetilde{V}}\big) R_G \big) (D-C_V)^{-1}
 \nonumber \\
\hphantom{\Pi}{} = L_{G^{-1}} R_G D (D-C_V)^{-1} = K_{G} D (D-C_V)^{-1}. \label{pi}
\end{gather}
Hence, the amKdV recursion operator $\widetilde \Psi(\widetilde V)$ is obtained from the recursion operator $\Psi(V)$ of the mKdV equation as stated in~\eqref{mkdv-recop} via
\begin{gather*}
 \widetilde{ \Psi}\big(\widetilde{V}\big) \stackrel{\eqref{rec with pi}, \eqref{pi}}= \big( K_{G} D (D-C_V)^{-1} \big) (D-C_V)D^{-1}(D+C_V)(D-A_V)\\
 \hphantom{\widetilde{ \Psi}\big(\widetilde{V}\big)\stackrel{\eqref{rec with pi}, \eqref{pi}}=}{} \times D^{-1}(D+A_V) \big( (D-C_V) D^{-1} K_{G^{-1}} \big) \\
\hphantom{\widetilde{ \Psi}\big(\widetilde{V}\big)}{} = K_{G^{-1}} (D+C_V)(D-A_V)D^{-1}(D+A_V) (D-C_V) D^{-1} K_G \\
\hphantom{\widetilde{ \Psi}\big(\widetilde{V}\big)}{} = (\mathbb{D}+C_{\widetilde V})(\mathbb{D}-A_{\widetilde V})\mathbb{D}^{-1}
 (\mathbb{D}+A_{\widetilde V})(\mathbb{D}-C_{\widetilde V})\mathbb{D}^{-1} ,
\end{gather*}
where the last step requires Proposition~\ref{p1}.
\end{proof}

To conclude this section, for the reader's convenience we give a direct verif\/ication that the mKdV and the amKdV equation are related via $\widetilde{V} = G^{-1} G_x$ and $V=G_xG^{-1}$. More precisely, we show
\begin{itemize}\itemsep=0pt
 \item[a)] $V_t = \Psi(V)V_x \Longleftrightarrow \widetilde{V}_t = \widetilde \Psi(\widetilde{V})\widetilde{V}_x$,
 \item[b)] $ \widetilde \Psi(\widetilde{V})\widetilde{V}_x$ constitutes the right-hand side of \eqref{mkdv2}.
\end{itemize}
To prove a), recall that $\widetilde \Psi(\widetilde V) = \Pi \Psi(V) \Pi^{-1}$ where $\Pi = K_G D (D-C_V)^{-1} $, compare \eqref{rec with pi}, \eqref{pi} in the proof of Theorem~\ref{t1}. Noting that $\widetilde V = K_G V$ implies $\widetilde{V}_x = K_G V_x$, since $(G^{-1}G_x)_x = G^{-1} (G_xG^{-1})_x G$ (observe that $\widetilde{V}_{xx} = K_G V_{xx}$ does not hold). Hence
\begin{gather}
 \Pi^{-1} \widetilde{V}_x \stackrel{\eqref{pi}}{=} (D-C_V) D^{-1} K_G^{-1} \widetilde{V}_x = (D-C_V) D^{-1} V_x = (D-C_V) V = V_x. \label{h1}
\end{gather}
Furthermore, $\widetilde{V}_t = (G^{-1}G_x)_t = G^{-1}G_{xt} - G^{-1} G_t G^{-1} G_x = G^{-1} (G_{xt} G^{-1} - G_t G^{-1} G_x G^{-1}) G = K_G D(G_t G^{-1})$. This shows
\begin{gather}
 \Pi^{-1} \widetilde{V}_t = (D-C_V) D^{-1} K_G^{-1} \widetilde{V}_t = (D-C_V) \big(G_t G^{-1}\big) = V_t . \label{h2}
\end{gather}
Now a) follows from $ \widetilde \Psi(\widetilde V)\widetilde V_x = \Pi \Psi(V) \Pi^{-1} \widetilde V_x \stackrel{\eqref{h1}}{=} \Pi \Psi(V) V_x = \Pi V_t \stackrel{\eqref{h2}}{=} \widetilde V_t $.

To prove b), note f\/irst that $\widetilde V_x = D\widetilde V = (D+C_{\widetilde V}) \widetilde V = \mathbb{D} \widetilde V$. Similarly, $(\mathbb{D}-C_{\widetilde V})\widetilde V = \mathbb{D} \widetilde V$, and then $(\mathbb{D}+A_{\widetilde V})\mathbb{D} \widetilde V = \mathbb{D}^2 \widetilde V + \{\widetilde V, \mathbb{D}\widetilde V\} = \mathbb{D} (\widetilde V^2+\mathbb{D}\widetilde V)$. Using Theorem~\ref{t1}, we hence get
\begin{gather*}
 \widetilde{\Psi} \big(\widetilde{V}\big) \widetilde{V}_x
= \big(\mathbb{D}+C_{\widetilde V}\big) \big(\mathbb{D} - A_{\widetilde V}\big)\mathbb{D}^{-1}
 \big(\mathbb{D}+A_{\widetilde V}\big) \big(\mathbb{D}-C_{\widetilde V}\big) \mathbb{D}^{-1} \mathbb{D} \widetilde V \\
 \hphantom{\widetilde{\Psi} \big(\widetilde{V}\big) \widetilde{V}_x}{}
 = \big(\mathbb{D}+C_{\widetilde V}\big) \big(\mathbb{D} - A_{\widetilde V}\big)\mathbb{D}^{-1}
 \big(\mathbb{D}+A_{\widetilde V}\big) \mathbb{D} \widetilde V = \big(\mathbb{D}+C_{\widetilde V}\big) \big(\mathbb{D} - A_{\widetilde V}\big)
 \big( \mathbb{D}\widetilde V+\widetilde V^2 \big) \\
\hphantom{\widetilde{\Psi} \big(\widetilde{V}\big) \widetilde{V}_x}{}
= \big(\mathbb{D}+C_{\widetilde V}\big) \big( \mathbb{D}^2 \widetilde V - 2 \widetilde V^3 \big)
= \mathbb{D}^3 \widetilde V + \big[\widetilde V,\mathbb{D}^2 \widetilde V\big] - 2 \mathbb{D} \big(\widetilde V^3\big) .
\end{gather*}
Finally, evaluation of $\mathbb{D}=D+C_{\widetilde V}$ gives the right-hand side of the amKdV equation.

\section{A new non-Abelian B\"acklund chart}\label{s new}

In the last two sections a non-Abelian generalization of the B\"acklund chart linking pKdV, KdV and mKdV equations was presented, and extended to include the amKdV equation \eqref{mkdv2}. The aim of the present section is a further extension to also include non-Abelian analogues of the KdV singularity manifold equation and the KdV interacting soliton equation. To the best of the authors' knowledge, the non-Abelian equations presented in this sections are new. Together with the links presented in the previous section, the resulting non-Abelian B\"acklund chart\footnote{To facilitate the comparison with the Abelian case, a brief overview of the results in~\cite{Fuchssteiner:Carillo:1989a} is provided in the Appendix.} we obtain can be summarized as follows:
\begin{gather}
\mbox{\scriptsize $\boxed{\mbox{pKdV}(W)} 	\, {\buildrel B_1\over{\text{\textendash\textendash}}} \,
\boxed{\mbox{KdV} (U)} 	\, {\buildrel M \over{\text{\textendash\textendash}}} \,
\boxed{\mbox{mKdV}(V)} 	\, {\buildrel B_2, B_3 \over{\text{\textendash\textendash}}} \, \boxed{\mbox{amKdV}(\widetilde V)}
{\buildrel B_4\over{\text{\textendash\textendash}}} \, \boxed{\mbox{KdV sing.}(\phi)} \, {\buildrel B_5 \over{\text{\textendash\textendash}}} \
\boxed{\mbox{int.\, soliton KdV}(S)}\, .$}\label{BC-nc}
\end{gather}
Moreover, the corresponding recursion operators are derived. Note that this chart extends to the respective hierarchies. In conclusion, it is shown that all recursion operators are hereditary. Finally, it is well-known that the KdV singularity manifold equation is invariant under the full M\"obius group, and it is shown that a generalized property holds for its non-Abelian counterpart.

To begin with, we introduce a non-Abelian analogue of the KdV singularity manifold equation (KdV sing.)
\begin{gather}\label{nc-KdVsing}
 \phi_t = \phi_x \{ \phi; x \} ,
\end{gather}
where $\{ \phi; x \}$ denotes the following non-Abelian version of the Schwarzian derivative
\begin{gather*}
 \{ \phi; x \} := \big( \phi_x^{-1}\phi_{xx} \big)_x - \frac{1}{2} \big( \phi_x^{-1}\phi_{xx} \big)^2.
\end{gather*}

\begin{rem}\quad
\begin{enumerate}\itemsep=0pt
\item[a)] To the best of the authors' knowledge, equation \eqref{nc-KdVsing} is new. For earlier non-Abelian ver\-sions of \eqref{nc-KdVsing} we refer to~\cite{Svinolupov}, see also~\cite{AF}. The dif\/ference to our approach is the non-Abelian interpretation of terms of the form $1/u_x^2$.
\item[b)] In the scalar case there is a very satisfying group theoretic explanation of the relation between the KdV sing.\ (also called UrKdV), mKdV and KdV equations~\cite{Wilson}, see also~\mbox{\cite{Depireux, Schiff}}. It seems to be an interesting problem to f\/ind an appropriate extension of these ideas to the non-Abelian setting. Departing from the present results, a f\/irst major dif\/f\/iculty is that our link between~\eqref{mkdv} and~\eqref{mkdv2} (which reduce to the same scalar equation) is not realized by an explicit mapping $V=F(\widetilde V)$.
\end{enumerate}
\end{rem}

The KdV sing.\ equation \eqref{nc-KdVsing} exhibits invariances analogous to its scalar counterpart. Note that the following theorem implies invariance under the full M\"obius group in the scalar case.

\begin{thrm} \label{t2} Let $A$, $B$, $C$, $D$ be constant operators. Then the non-Abelian KdV singularity manifold equation~\eqref{nc-KdVsing} is invariant under the transformations
\begin{gather*}
\phi \mapsto (\phi A+B)^{-1} (\phi C +D), \qquad \phi \mapsto (C\phi+D) (A\phi+B)^{-1} ,
 \end{gather*}
provided that $A$ and $D-BA^{-1}C$ are invertible.

In other words, if $\phi$ is an invertible solution of \eqref{nc-KdVsing}, then also $(\phi A+B)^{-1} (\phi C +D)$ and $(C\phi+D) (A\phi+B)^{-1} $ are solutions of~\eqref{nc-KdVsing}.
\end{thrm}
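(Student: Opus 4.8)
The plan is to break each fractional-linear transformation into elementary pieces and to prove invariance for each piece separately. Writing $u := \phi_x^{-1}\phi_{xx}$, so that $\{\phi;x\} = u_x - \frac{1}{2} u^2$, the two families in question reduce to compositions of affine maps $\psi = P\phi Q + R$ (with $P,Q,R$ constant and $P,Q$ invertible) and the single inversion $\psi = \phi^{-1}$. Indeed, since $A$ is invertible the left family factors as
\[
(\phi A + B)^{-1}(\phi C + D) = A^{-1}\,(\phi + BA^{-1})^{-1}\,(D - BA^{-1}C) + A^{-1}C,
\]
that is, a translation $\phi \mapsto \phi + BA^{-1}$, followed by inversion, followed by the affine map whose right multiplier is $D - BA^{-1}C$; the right family factors analogously. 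The two invertibility hypotheses are then exactly what make these affine factors genuine (invertible multipliers), and it suffices to show that translation, affine scaling and inversion each send solutions of \eqref{nc-KdVsing} to solutions.

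The affine case is a short direct computation. From $\psi_x = P\phi_x Q$ and $\psi_{xx} = P\phi_{xx}Q$ one gets $u_\psi = \psi_x^{-1}\psi_{xx} = Q^{-1} u Q$, hence $\{\psi;x\} = Q^{-1}\{\phi;x\}Q$. Combining this with $\psi_t = P\phi_t Q$ gives
\[
\psi_x\{\psi;x\} = P\phi_x\{\phi;x\}Q = P\phi_t Q = \psi_t,
\]
so \eqref{nc-KdVsing} is preserved; note that the left multiplier $P$ and the translation $R$ play no role.

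The inversion $\psi = \phi^{-1}$ is the crux. I would first compute $\psi_x = -\phi^{-1}\phi_x\phi^{-1}$ and $\psi_{xx} = 2\phi^{-1}\phi_x\phi^{-1}\phi_x\phi^{-1} - \phi^{-1}\phi_{xx}\phi^{-1}$, which yield $u_\psi = \phi u\phi^{-1} - 2\phi_x\phi^{-1}$. The main work is to expand $(u_\psi)_x - \frac{1}{2}u_\psi^2$ and check that the several triple products of the form $\phi_x\phi^{-1}\phi_x$ cancel, using $\phi_x u = \phi_{xx}$, leaving the clean law
\[
\{\phi^{-1};x\} = \phi\,\{\phi;x\}\,\phi^{-1}.
\]
Granting this, $\psi_x\{\psi;x\} = -\phi^{-1}\phi_x\{\phi;x\}\phi^{-1} = -\phi^{-1}\phi_t\phi^{-1} = (\phi^{-1})_t = \psi_t$, so inversion also preserves \eqref{nc-KdVsing}.

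Assembling the pieces, each transformation in the statement is a composition of maps that carry invertible solutions to invertible solutions wherever the intermediate inverses exist, so the composites do as well; invertibility of $A$, of $D - BA^{-1}C$, and of $\phi$ itself guarantees that these inverses are available along the chosen decomposition. I expect the inversion identity to be the only genuinely delicate step, precisely because the non-commutative Schwarzian generates triple products that must be matched and cancelled by hand rather than collapsed by commutativity.
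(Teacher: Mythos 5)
Your proposal is correct and follows essentially the same route as the paper: the paper likewise reduces the fractional-linear maps to a composition of affine transformations $\phi\mapsto A\phi B+C$ (its Proposition~\ref{p3}) and the inversion $\phi\mapsto\phi^{-1}$ (its Proposition~\ref{p2}), the latter resting on exactly the conjugation law $\{\phi^{-1};x\}=\phi\,\{\phi;x\}\,\phi^{-1}$ that you identify as the crux, with the triple-product cancellation carried out explicitly there. Your factorization $(\phi A+B)^{-1}(\phi C+D)=A^{-1}(\phi+BA^{-1})^{-1}(D-BA^{-1}C)+A^{-1}C$ differs from the paper's $(\phi A+B)^{-1}(D-BA^{-1}C)+A^{-1}C$ only in where the invertible factor $A$ is absorbed, which is immaterial.
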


\begin{rem} The condition that $D-BA^{-1}C$ is invertible can be understood as a generalization of the scalar condition $ad-bc\not=0$ in the case $a\not=0$. Note however, that this is not the only possibility to guarantee the invariances of Theorem~\ref{t2}. For example, the assumption that $B$ and $D-AB^{-1}C$ are invertible is also suf\/f\/icient.
\end{rem}

The main ingredient of the proof of Theorem \ref{t2} is the following observation.

\begin{prop} \label{p2} The non-Abelian KdV singularity manifold equation \eqref{nc-KdVsing} is invariant under the transformation $\phi\mapsto\phi^{-1}$.
\end{prop}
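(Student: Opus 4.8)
The plan is to show that the non-Abelian Schwarzian derivative $\{\phi;x\}$ is invariant (not merely covariant) under the inversion $\phi\mapsto\phi^{-1}$, and then to verify that the right-hand side $\phi_x\{\phi;x\}$ of \eqref{nc-KdVsing} transforms correctly. First I would set $\psi:=\phi^{-1}$ and compute the successive $x$-derivatives of $\psi$ in terms of those of $\phi$. The basic facts are $\psi_x=-\phi^{-1}\phi_x\phi^{-1}$, and then, differentiating again, $\psi_{xx}=-\phi^{-1}\phi_{xx}\phi^{-1}+2\phi^{-1}\phi_x\phi^{-1}\phi_x\phi^{-1}$. These are the only two derivative computations that require care about ordering, since everything is non-commutative; I would carry them out keeping each $\phi^{-1}$ factor in place.

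The next step is to assemble the building block $\psi_x^{-1}\psi_{xx}$, because the Schwarzian is built entirely from this combination. From $\psi_x=-\phi^{-1}\phi_x\phi^{-1}$ one gets $\psi_x^{-1}=-\phi\phi_x^{-1}\phi$, and multiplying by the expression for $\psi_{xx}$ above, the outer $\phi$ and $\phi^{-1}$ factors should telescope, leaving
\begin{gather*}
 \psi_x^{-1}\psi_{xx}=\phi\,\phi_x^{-1}\phi_{xx}\,\phi^{-1}-2\phi_x\phi^{-1}.
\end{gather*}
I would then compute $\big(\psi_x^{-1}\psi_{xx}\big)_x$ and $\frac12\big(\psi_x^{-1}\psi_{xx}\big)^2$ separately. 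The key algebraic phenomenon to watch for is that the conjugation by $\phi$ of the original block $\phi_x^{-1}\phi_{xx}$ produces cross terms upon differentiation and squaring, and the extra summand $-2\phi_x\phi^{-1}$ is designed precisely so that all such cross terms cancel. Concretely, I expect
\begin{gather*}
 \{\psi;x\}=\phi\,\{\phi;x\}\,\phi^{-1},
\end{gather*}
i.e.\ the Schwarzian is covariant by conjugation with $\phi$ rather than strictly invariant.

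With the transformation law of the Schwarzian in hand, the final step is to check the full equation. I would compute the right-hand side for $\psi=\phi^{-1}$: using $\psi_x=-\phi^{-1}\phi_x\phi^{-1}$ and the covariance $\{\psi;x\}=\phi\{\phi;x\}\phi^{-1}$, the product becomes
\begin{gather*}
 \psi_x\{\psi;x\}=-\phi^{-1}\phi_x\phi^{-1}\cdot\phi\{\phi;x\}\phi^{-1}=-\phi^{-1}\phi_x\{\phi;x\}\phi^{-1}=-\phi^{-1}\phi_t\phi^{-1},
\end{gather*}
where the last equality uses that $\phi$ solves \eqref{nc-KdVsing}. Since $\psi_t=(\phi^{-1})_t=-\phi^{-1}\phi_t\phi^{-1}$, this is exactly $\psi_t$, so $\psi$ solves \eqref{nc-KdVsing} as well.

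I expect the main obstacle to be the intermediate simplification of $\big(\psi_x^{-1}\psi_{xx}\big)_x-\frac12\big(\psi_x^{-1}\psi_{xx}\big)^2$: one must verify that the derivative of the conjugating factors and the square of the additive term $-2\phi_x\phi^{-1}$ exactly cancel the cross terms coming from $\big(\phi\,\phi_x^{-1}\phi_{xx}\,\phi^{-1}\big)_x$, so that only the conjugated Schwarzian $\phi\{\phi;x\}\phi^{-1}$ survives. This cancellation is the genuine content of the proposition; once it is established, the rest is a short substitution. It is worth noting that this non-commutative covariance is the natural analogue of the classical scalar identity $\{\phi^{-1};x\}=\{\phi;x\}$, with conjugation replacing equality, and that it is this covariance (together with the factor $\phi_x$ in the equation) that makes \eqref{nc-KdVsing} the correct non-Abelian lift.
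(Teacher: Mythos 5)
Your proposal is correct and takes essentially the same route as the paper's proof: compute $\psi_x^{-1}\psi_{xx}=\phi\,\phi_x^{-1}\phi_{xx}\,\phi^{-1}-2\phi_x\phi^{-1}$, verify by term-by-term cancellation that $\{\psi;x\}=\phi\,\{\phi;x\}\,\phi^{-1}$, and conclude from $\psi_t=-\phi^{-1}\phi_t\phi^{-1}=\psi_x\{\psi;x\}$. (Only your opening sentence, which announces strict invariance of the Schwarzian ``not merely covariance,'' is a slip; the argument you actually outline correctly uses the conjugation covariance, exactly as the paper does.)
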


\begin{proof} To start with we calculate the Schwarzian derivative of $\psi:=\phi^{-1}$.

Since $\psi_x =-\phi^{-1}\phi_x\phi^{-1}$, we get $\psi_{xx} = -\phi^{-1}\phi_{xx}\phi^{-1} +2\phi^{-1}\phi_x\phi^{-1}\phi_x\phi^{-1}$. Hence
\begin{gather*}
 \psi_x^{-1}\psi_{xx} = \phi\phi_x^{-1}\phi_{xx}\phi^{-1} - 2 \phi_x\phi^{-1},
\end{gather*}
and therefore
\begin{gather*}
 \{\psi; x\} = \big(\psi_x^{-1}\psi_{xx}\big)_x - \frac{1}{2} \big(\psi_x^{-1}\psi_{xx}\big)^2 \\
\hphantom{\{\psi; x\}}{} = \phi_{xx}\phi^{-1} + \boxed{ \phi \big(\phi_x^{-1}\phi_{xx} \big)_x \phi^{-1} }
 - \phi\phi_x^{-1}\phi_{xx}\phi^{-1}\phi_x\phi^{-1} - 2\phi_{xx}\phi^{-1} + 2\phi_x\phi^{-1}\phi_x\phi^{-1} \\
\hphantom{\{\psi; x\}=}{}
\boxed{ - \frac{1}{2} \phi \big(\phi_x^{-1}\phi_{xx}\big)^2 \phi^{-1} }
 + \phi \phi_x^{-1}\phi_{xx} \phi^{-1}\phi_x\phi^{-1} + \phi_{xx} \phi^{-1} - 2\phi_x\phi^{-1}\phi_x\phi^{-1}.
\end{gather*}
All terms with exception of the boxed ones cancel out, hence $\{\psi;x\} = \phi \{\phi, x\} \phi^{-1}$. As a result,
\begin{gather*}
 \psi_t = -\phi^{-1} \phi_t \phi^{-1} \stackrel{\eqref{nc-KdVsing}}{=} -\phi^{-1} \big( \phi_x \{ \phi;x \} \big) \phi^{-1}
 = \big({-} \phi^{-1}\phi_x \phi^{-1} \big) \big( \phi \{ \phi;x \} \phi^{-1} \big) = \psi_x \{ \psi; x\} ,
\end{gather*}
which completes the proof.
\end{proof}

The following invariance is straightforward to verify.

\begin{prop} \label{p3} The non-Abelian KdV singularity manifold equation \eqref{nc-KdVsing} is invariant under the transformation $\phi\mapsto A\phi B + C$ where $A$, $B$, $C$ are constant operators, provided that~$A$ and~$B$ are invertible.
\end{prop}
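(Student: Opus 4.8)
The plan is to substitute $\widetilde\phi := A\phi B + C$ directly into \eqref{nc-KdVsing} and check that the equation persists whenever $\phi$ solves it. The crucial structural fact is that $A$, $B$, $C$ are independent of both $x$ and $t$, so they commute with $D$ and with $\partial_t$, and in particular the additive constant $C$ is annihilated by differentiation. Thus I would first record $\widetilde\phi_x = A\phi_x B$, $\widetilde\phi_{xx} = A\phi_{xx} B$ and $\widetilde\phi_t = A\phi_t B$. Invertibility of $A$ and $B$ then guarantees that $\widetilde\phi_x$ is invertible, with $\widetilde\phi_x^{-1} = B^{-1}\phi_x^{-1}A^{-1}$.

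Next I would track how the inner quantity transforms. Writing $T := \phi_x^{-1}\phi_{xx}$, one computes $\widetilde\phi_x^{-1}\widetilde\phi_{xx} = B^{-1}\phi_x^{-1}A^{-1}\,A\phi_{xx}B = B^{-1} T B$, so the $A$-factors cancel and only conjugation by the constant $B$ survives. Because $B$ is constant, $(B^{-1}TB)_x = B^{-1}T_x B$ and $(B^{-1}TB)^2 = B^{-1}T^2 B$, whence the non-Abelian Schwarzian derivative satisfies $\{\widetilde\phi;x\} = B^{-1}\big(T_x - \frac{1}{2}T^2\big)B = B^{-1}\{\phi;x\}B$. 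In other words it transforms by pure conjugation with $B$, exactly mirroring the behaviour in the commutative case.

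Finally I would assemble the right-hand side: $\widetilde\phi_x\{\widetilde\phi;x\} = (A\phi_x B)\big(B^{-1}\{\phi;x\}B\big) = A\,\phi_x\{\phi;x\}\,B = A\phi_t B = \widetilde\phi_t$, where the middle equality uses that $\phi$ solves \eqref{nc-KdVsing}. This establishes the claim. I do not expect any genuine obstacle here, since the argument is purely algebraic; the only point demanding care is the bookkeeping of whether each constant operator acts by left or by right multiplication, so that the left factor $A$ passes unchanged through the entire right-hand side while the right factor $B$ first conjugates the Schwarzian derivative and then cancels against the $B^{-1}$ produced by $\widetilde\phi_x^{-1}$.
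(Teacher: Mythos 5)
Your verification is correct and is exactly the direct substitution argument the paper has in mind: the paper states Proposition~\ref{p3} with the remark that it is ``straightforward to verify'' and omits the computation, which your proposal supplies. The key points---that the $A$-factors cancel inside $\widetilde\phi_x^{-1}\widetilde\phi_{xx}$ so the Schwarzian derivative transforms by conjugation with the constant $B$, which then cancels against $\widetilde\phi_x$---are handled correctly, including the observation that invertibility of $A$ and $B$ keeps $\widetilde\phi_x$ invertible.
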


\begin{proof}[Proof of Theorem \ref{t2}] Let $\phi$ be a solution of~\eqref{nc-KdVsing}. By Proposition~\ref{p3}, also $\phi A+B$ solves~\eqref{nc-KdVsing} since~$A$ is invertible. Then, by Proposition~\ref{p2}, the inverse $\psi=(\phi A+B)^{-1}$ is a solution of~\eqref{nc-KdVsing}. Applying Proposition~\ref{p3}, $\psi(D-BA^{-1}C)+A^{-1}C = (\phi A+B)^{-1}(D-BA^{-1}C+(\phi A+B)A^{-1}C) = (\phi A+B)^{-1}(\phi C+D)$ is a solution of~\eqref{nc-KdVsing}.
\end{proof}

Finally, we introduce the non-Abelian KdV interacting soliton equation (int.\ soliton KdV)
 \begin{gather}\label{intS}
S_t = \left( S \left( \big(S^{-1}S_x\big)_x - \frac{1}{2} \big(S^{-1}S_x\big)^2 \right) \right)_x .
\end{gather}
Again, to the best of the authors' knowledge, this equation is new.

Analogously as in Section~\ref{s mkdvs}, it can be verif\/ied that the amKdV \eqref{mkdv2}, KdV sing.~\eqref{nc-KdVsing} and int.\ soliton KdV~\eqref{intS} equations are related by the B\"acklund transformations
\begin{gather}
B_4\colon \ 2 \phi_x {\widetilde V} - \phi_{xx} =0 , \label{B-tildeVphi} \\
B_5\colon \ \phi_x-S =0 . \label{BSphi}
\end{gather}

\begin{rem} Note that the counterpart of the transformation $\widetilde{V} = \frac{1}{2}\phi_x ^{-1}\phi_{xx}$ from \eqref{B-tildeVphi}, namely, $\widehat{V} = -\frac{1}{2}\phi_{xx}\phi_x ^{-1}$, also maps solutions $\phi$ of the KdV sing.\ equation~\eqref{nc-KdVsing} to solutions $\widehat{V}$ of the amKdV equation~\eqref{mkdv2}.
\end{rem}

Both the recursion operators $\Upsilon(\phi)$ of \eqref{nc-KdVsing} as well as $\chi(S)$ of \eqref{intS} can be constructed on use of the B\"acklund links (\ref{B-tildeVphi}), (\ref{BSphi}).

\begin{thrm} \label{t rec ops} Define $N(T) = \frac{1}{2} T^{-1}T_x$. Then the recursion operators $\Upsilon (\phi)$ of KdV sing.~\eqref{nc-KdVsing} and $\chi(S)$ of~\eqref{intS} are given by
 \begin{gather}
 \Upsilon (\phi) = L_{\phi_x} \mathbb{D}^{-1} \big(\mathbb{D}-A_{N(\phi_x)}\big) \big(\mathbb{D}-C_{N(\phi_x)}\mathbb{D}^{-1} C_{N(\phi_x)}\big)
 \big(\mathbb{D}+A_{N(\phi_x)}\big) L_{\phi_x^{-1}} , \label{rec sing} \\
 \label{rec int} \chi(S)= \textstyle L_{S} \big(\mathbb{D}-A_{N(S)}\mathbb{D}^{-1}A_{N(S)}\big)
 \big(\mathbb{D}-C_{N(S)}\mathbb{D}^{-1} C_{N(S)}\big) L_{S}^{-1} .
 \end{gather}
 Note that $\mathbb{D} = D+C_{N(\phi_x)} = D+C_{N(S)} $.
\end{thrm}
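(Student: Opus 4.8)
The plan is to obtain both operators from the already-established amKdV recursion operator $\widetilde\Psi(\widetilde V)$ of Theorem~\ref{t1} by conjugation with the Bäcklund transformation operators of $B_4$ \eqref{B-tildeVphi} and $B_5$ \eqref{BSphi}, exactly as in the proof of Theorem~\ref{t1}. By the general principle recalled in the Appendix (following \cite{FokasFuchssteiner:1981}), a link $B(u,v)=0$ conjugates the recursion operators via $\Pi_B=-(B_v)^{-1}(B_u)$, subscripts denoting Fréchet derivatives. I would therefore set $\Upsilon(\phi)=\Pi_4\,\widetilde\Psi(\widetilde V)\,\Pi_4^{-1}$ and $\chi(S)=\Pi_5\,\Upsilon(\phi)\,\Pi_5^{-1}$, keeping the identifications $\widetilde V=N(\phi_x)$ and $S=\phi_x$ so that $\mathbb{D}=D+C_{N(\phi_x)}=D+C_{N(S)}$ throughout; here I abbreviate $A=A_{\widetilde V}$, $C=C_{\widetilde V}$.

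First I would compute the transformation operators. For $B_5\colon\phi_x-S=0$ one reads off $(B_5)_\phi=D$ and $(B_5)_S=-I$, so $\Pi_5=D$. For $B_4\colon 2\phi_x\widetilde V-\phi_{xx}=0$ one finds $(B_4)_{\widetilde V}=2L_{\phi_x}$ and $(B_4)_\phi=2R_{\widetilde V}D-D^2=-(D-2R_{\widetilde V})D$, whence $\Pi_4=\big[(D-2R_{\widetilde V})D\big]^{-1}2L_{\phi_x}$. The key to putting this into usable form is a set of conjugation identities for $\mathbb{D}$, the analogues of Proposition~\ref{p1}: since $2\widetilde V=\phi_x^{-1}\phi_{xx}$, a direct computation gives $R_{\phi_x}DR_{\phi_x^{-1}}=\mathbb{D}-A$ and $DL_{\phi_x}=L_{\phi_x}(\mathbb{D}+A)$, and combining them $(\mathbb{D}-A)L_{\phi_x}=L_{\phi_x}(\mathbb{D}+C)$. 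Substituting $D-2R_{\widetilde V}=\mathbb{D}-A$ and pushing $L_{\phi_x}$ to the left through these relations collapses $\Pi_4$ to the clean form $\Pi_4=2L_{\phi_x}(\mathbb{D}+A)^{-1}(\mathbb{D}+C)^{-1}$.

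With $\Pi_4$ in hand I would insert the $\mathbb{D}$-form $\widetilde\Psi(\widetilde V)=(\mathbb{D}+C)(\mathbb{D}-A)\mathbb{D}^{-1}(\mathbb{D}+A)(\mathbb{D}-C)\mathbb{D}^{-1}$ of Theorem~\ref{t1} and form $\Pi_4\widetilde\Psi\Pi_4^{-1}$: the scalars $2$ and $\tfrac12$ and the outer $(\mathbb{D}+C)^{\pm1}$ factors cancel, leaving $L_{\phi_x}(\mathbb{D}+A)^{-1}(\mathbb{D}-A)\mathbb{D}^{-1}(\mathbb{D}+A)(\mathbb{D}-C)\mathbb{D}^{-1}(\mathbb{D}+C)(\mathbb{D}+A)L_{\phi_x^{-1}}$. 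Two algebraic identities — the non-Abelian counterparts of the factorizations already used for \eqref{mkdv-recop} — then finish the job: $(\mathbb{D}-A)\mathbb{D}^{-1}(\mathbb{D}+A)=(\mathbb{D}+A)\mathbb{D}^{-1}(\mathbb{D}-A)=\mathbb{D}-A\mathbb{D}^{-1}A$ reduces the leftmost block to $\mathbb{D}^{-1}(\mathbb{D}-A)$, and $(\mathbb{D}-C)\mathbb{D}^{-1}(\mathbb{D}+C)=\mathbb{D}-C\mathbb{D}^{-1}C$ produces the central factor, yielding \eqref{rec sing}. For $\chi(S)$ I would conjugate by $\Pi_5=D$ with $S=\phi_x$: using $DL_S=L_S(\mathbb{D}+A)$ at the left end and $L_S^{-1}D^{-1}=(\mathbb{D}+A)^{-1}L_S^{-1}$ at the right end, the two bracketing $(\mathbb{D}+A)$ factors cancel, and one further use of $(\mathbb{D}+A)\mathbb{D}^{-1}(\mathbb{D}-A)=\mathbb{D}-A\mathbb{D}^{-1}A$ gives exactly \eqref{rec int}.

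The main obstacle is the non-commutative operator bookkeeping, concentrated in two places. The first is establishing the conjugation identities linking $\mathbb{D}$ to $L_{\phi_x}$ and $R_{\phi_x}$: the factor $\tfrac12$ in $N(T)=\tfrac12 T^{-1}T_x$ must be tracked with care, since a naive conjugation $K_{\phi_x}DK_{\phi_x}^{-1}$ yields $D+2C_{N(\phi_x)}$ rather than $\mathbb{D}=D+C_{N(\phi_x)}$, and it is precisely the mixed left/right structure of $\Pi_4$ that reinstates the correct coefficient. The second is verifying the factorization identities $(\mathbb{D}\mp A)\mathbb{D}^{-1}(\mathbb{D}\pm A)=\mathbb{D}-A\mathbb{D}^{-1}A$ and the analogue for $C$, on which both the cancellation of $(\mathbb{D}+C)$ in the $\Upsilon$ computation and the cancellation of $(\mathbb{D}+A)$ in the $\chi$ computation depend. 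Once these identities are secured, the two conjugations are essentially forced and the stated forms \eqref{rec sing} and \eqref{rec int} follow, with the final remark $\mathbb{D}=D+C_{N(\phi_x)}=D+C_{N(S)}$ being an immediate consequence of $S=\phi_x$.
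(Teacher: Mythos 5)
Your proposal is correct and follows essentially the same route as the paper: the same Fréchet derivatives $(B_4)_{\widetilde V}=2L_{\phi_x}$, $(B_4)_\phi=-(D-2R_{\widetilde V})D$, $(B_5)_\phi=D$, $(B_5)_S=-I$, the same intertwining identities $DL_{\phi_x}=L_{\phi_x}\big(\mathbb{D}+A_{\widetilde V}\big)$ and $\big(\mathbb{D}-A_{\widetilde V}\big)L_{\phi_x}=L_{\phi_x}\big(\mathbb{D}+C_{\widetilde V}\big)$, and the same factorizations $\big(\mathbb{D}\mp X\big)\mathbb{D}^{-1}\big(\mathbb{D}\pm X\big)=\mathbb{D}-X\mathbb{D}^{-1}X$. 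The only (immaterial) difference is the order of derivation: the paper first conjugates $\widetilde\Psi\big(\widetilde V\big)$ by the combined operator $\Pi_5\Pi_4=2L_S\big(\mathbb{D}+C_{\widetilde V}\big)^{-1}$ to obtain $\chi(S)$ and then recovers $\Upsilon(\phi)=\Pi_5^{-1}\chi(S)\Pi_5$, whereas you derive $\Upsilon(\phi)$ first from $\Pi_4$ alone and then get $\chi(S)=\Pi_5\,\Upsilon(\phi)\,\Pi_5^{-1}$.
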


\begin{proof}Indeed, from the B\"acklund links we have that the recursion operators satisfy
\begin{alignat*}{3}
 &\Upsilon (\phi) =\Pi_4 \widetilde{\Psi} \big(\widetilde{V}\big)\Pi_4^{-1}, \qquad && \Pi_4 = - (B_4)_{\phi}^{-1} (B_4)_{\widetilde{V}}, & \\
 & \chi(S)= \Pi_5 \Upsilon (\phi) \Pi_5^{-1}, \qquad && \Pi_5 = - (B_5)_S^{-1} (B_5)_{\phi} .&
\end{alignat*}
It is straightforward to calculate $(B_4)_{\widetilde V} = 2L_{\phi_x}$, $(B_4)_{\phi} = -(D-2R_{\widetilde V})D$. Therefore, we get
\begin{gather*}
\Pi_4 = 2 D^{-1}\big(D-2R_{\widetilde{V}}\big)^{-1} L_{\phi_x} .
\end{gather*} Similarly $(B_5)_{S} = -I$, $(B_5)_{\phi} = D$, and hence $\Pi_5=D$. Consequently,
\begin{gather*}
 \Pi_5 \Pi_4 = 2 \big(D-2R_{\widetilde{V}}\big)^{-1} L_{\phi_x} = 2 \big(\mathbb{D}-A_{\widetilde{V}}\big)^{-1} L_{S} .
\end{gather*}
Using the identity $(\mathbb{D}-A_{\widetilde V})L_S = L_S(\mathbb{D}+C_{\widetilde V})$, which can be checked directly using only the product rule and $\widetilde V = \frac{1}{2}S^{-1}S_x$, we f\/ind
\begin{gather*}
 \Pi_5 \Pi_4 = 2 L_S \big(\mathbb{D} + C_{\widetilde{V}}\big)^{-1} .
\end{gather*}
From this and Theorem \ref{t1}, we get
\begin{gather*}
 \chi (S) = (\Pi_5 \Pi_4) \widetilde{\Psi}\big(\widetilde{V}\big) (\Pi_5 \Pi_4 )^{-1}
 = L_S \big(\mathbb{D}+C_{\widetilde{V}}\big)^{-1} \widetilde{\Psi}\big(\widetilde{V}\big) \big(\mathbb{D}+C_{\widetilde{V}}\big)L_S^{-1} \\
\hphantom{\chi (S)}{}
 = L_S \big(\mathbb{D}-A_{\widetilde{V}}\big) \mathbb{D}^{-1} \big(\mathbb{D}+A_{\widetilde{V}}\big)
 \big(\mathbb{D}-C_{\widetilde{V}}\big)\mathbb{D}^{-1} \big(\mathbb{D}+C_{\widetilde{V}}\big) L_{S^{-1}} .
\end{gather*}
The claim for $\chi(S)$ follows from observing that $(\mathbb{D}-A_{\widetilde{V}}) \mathbb{D}^{-1} (\mathbb{D}+A_{\widetilde{V}}) = \mathbb{D} -A_{\widetilde{V}}\mathbb{D}^{-1}A_{\widetilde{V}} = \mathbb{D} - A_{N(S)} \mathbb{D}^{-1} A_{N(S)} $ and $(\mathbb{D}-C_{\widetilde{V}}) \mathbb{D}^{-1} (\mathbb{D}+C_{\widetilde{V}}) = \mathbb{D} -C_{\widetilde{V}}\mathbb{D}^{-1}C_{\widetilde{V}} = \mathbb{D} - C_{N(S)} \mathbb{D}^{-1} C_{N(S)} $. Finally,
\begin{gather*}
 \Upsilon (\phi) = \Pi_5^{-1} \chi(S) \Pi_5 = D^{-1} L_S \big( \mathbb{D} - A_{N(S)} \mathbb{D}^{-1} A_{N(S)} \big)\big(\mathbb{D} -
 C_{N(S)} \mathbb{D}^{-1} C_{N(S)}\big) L_S^{-1} D.
\end{gather*}
Since $DL_{S}^{-1} = L_S^{-1}D - L_S^{-1}L_{S_x} L_S^{-1}$, the identity $L_S^{-1}D = (D+L_{S^{-1}S_x})L_S^{-1} = (D+2L_{N(S)})L_S^{-1} = (\mathbb{D}+A_{N(S)})L_S^{-1} $ holds, and thus
\begin{gather*}
 \Upsilon (\phi) = L_S \big(\mathbb{D}+A_{N(S)}\big)^{-1} \big( \mathbb{D} - A_{N(S)} \mathbb{D}^{-1} A_{N(S)} \big)\big(\mathbb{D} -
 C_{N(S)} \mathbb{D}^{-1} C_{N(S)}\big) \big(\mathbb{D}+A_{N(S)}\big) L_S^{-1} \\
\hphantom{\Upsilon (\phi)}{}
= L_{S} \mathbb{D}^{-1} \big(\mathbb{D}- A_{N(S)}\big) \big(\mathbb{D}-C_{N(S)}\mathbb{D}^{-1} C_{N(S)}\big) \big(\mathbb{D}+A_{N(S)}\big) L_S^{-1},
\end{gather*}
which is the claim for $\Upsilon (\phi) $ upon substituting $S=\phi_x$.
\end{proof}

In the scalar case, the recursion operator for the KdV sing.\ equation \eqref{nc-KdVsing} is well-known, see, e.g.,~\cite{Fuchssteiner:Carillo:1989a}, where it appears in the form
 \begin{gather*}
 \Upsilon(\phi) = \phi_x D^{-1} \big(D^3 + 2D{\sf U}+2{\sf U}D \big) \phi_x^{-1}.
 \end{gather*}
with ${\sf U}=\frac{1}{2} (\phi_{xx}/\phi_x)_x - \frac{1}{4} (\phi_{xx}/\phi_x)^2$. To facilitate comparison to the non-Abelian case, we state the following reformulation of \eqref{rec sing}.

\begin{cor} The recursion operator $\Upsilon (\phi)$ of the KdV sing.\ equation \eqref{nc-KdVsing} in Theorem~{\rm \ref{t rec ops}} can be written as
 \begin{gather*}
 \Upsilon (\phi) = L_{\phi_x} \mathbb{D}^{-1} \big( \mathbb{D}^3 + \mathbb{D} A_{\sf U} + A_{\sf U} \mathbb{D} + C_{\sf U} \mathbb{D}^{-1} C_{\sf U}
 \big) L_{\phi_x^{-1}} ,
 \end{gather*}
 where ${\sf U} = ( N(\phi_x) )_x - ( N(\phi_x) )^2$.
\end{cor}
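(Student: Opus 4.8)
The plan is to strip off the common outer structure and reduce the statement to a single algebraic operator identity, then expand. Both the expression in \eqref{rec sing} and the one claimed in the corollary have the shape $L_{\phi_x}\mathbb{D}^{-1}(\,\cdots)L_{\phi_x^{-1}}$, so, writing $N:=N(\phi_x)$ and ${\sf U}=N_x-N^2$, it suffices to prove
\[
(\mathbb{D}-A_N)\big(\mathbb{D}-C_N\mathbb{D}^{-1}C_N\big)(\mathbb{D}+A_N)
= \mathbb{D}^3+\mathbb{D}A_{\sf U}+A_{\sf U}\mathbb{D}+C_{\sf U}\mathbb{D}^{-1}C_{\sf U}.
\]
I would first assemble a small toolbox of commutation rules for $\mathbb{D}=D+C_N$. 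The key observation is that $\mathbb{D}$ is a derivation (both $D$ and $C_N$ obey the Leibniz rule), whence $[\mathbb{D},A_T]=A_{\mathbb{D}T}$ and $[\mathbb{D},C_T]=C_{\mathbb{D}T}$ for every function $T$; in particular $\mathbb{D}N=N_x$ because $C_N N=0$. I would also record the purely algebraic identities $A_N C_N=C_N A_N=C_{N^2}$ and $A_N^2+C_N^2=2A_{N^2}$, each of which is a one-line check on a test element.

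Next I would expand the triple product from the inside out, using $C_N\mathbb{D}=\mathbb{D}C_N-C_{N_x}$ to obtain $C_N\mathbb{D}^{-1}C_N\mathbb{D}=C_N^2-C_N\mathbb{D}^{-1}C_{N_x}$ and $C_N\mathbb{D}^{-1}C_N A_N=C_N\mathbb{D}^{-1}C_{N^2}$. This gives
\[
\big(\mathbb{D}-C_N\mathbb{D}^{-1}C_N\big)(\mathbb{D}+A_N)
= \mathbb{D}^2+\mathbb{D}A_N-C_N^2+C_N\mathbb{D}^{-1}C_{\sf U},
\]
using $C_{N_x}-C_{N^2}=C_{\sf U}$. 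Left-multiplying by $(\mathbb{D}-A_N)$ and once more commuting $\mathbb{D}$ past $C_N$ inside the nonlocal term, the two $\mathbb{D}^{-1}$ contributions combine (again via $C_{N_x}-C_{N^2}=C_{\sf U}$) into exactly $C_{\sf U}\mathbb{D}^{-1}C_{\sf U}$, which matches the nonlocal term on the right, plus a \emph{local} remainder $C_N C_{\sf U}$ that carries no inverse. Thus the whole nonlocal part is accounted for, and the remaining purely local part must equal $\mathbb{D}A_{\sf U}+A_{\sf U}\mathbb{D}$.

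For the local part I would push every factor of $\mathbb{D}$ to the right using the commutation rules. The coefficient of $\mathbb{D}$ collects to $2A_{N_x}-A_N^2-C_N^2=2A_{\sf U}$ by $A_N^2+C_N^2=2A_{N^2}$, which is consistent with $\mathbb{D}A_{\sf U}+A_{\sf U}\mathbb{D}=2A_{\sf U}\mathbb{D}+A_{\mathbb{D}{\sf U}}$; in the $\mathbb{D}$-free part several commutator products cancel (using $C_{N^2}C_N=C_N C_{N^2}$, which follows from the Jacobi identity since $[N^2,N]=0$). The crux, and the one step I expect to require genuine care, is then the scalar-coefficient identity
\[
A_{\mathbb{D}N_x}-A_N A_{N_x}-C_{N_x}C_N = A_{\mathbb{D}{\sf U}},
\qquad \mathbb{D}{\sf U}=N_{xx}-2N_x N,
\]
which I would verify by applying both sides to an arbitrary $S$ and checking that the six noncommutative monomials $NN_xS$, $N_xNS$, $SNN_x$, $SN_xN$, $NSN_x$, $N_xSN$ cancel in the correct pattern, leaving only $-2(N_xNS+SN_xN)=-2A_{N_xN}S$ on both sides. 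This bookkeeping of noncommutative monomials is the main obstacle; everything else is routine application of the derivation property of $\mathbb{D}$. As a sanity check, setting $C_N=0$ (so $\mathbb{D}=D$ and $A_{\sf U}=2{\sf U}$) collapses the result to the known scalar form $\phi_x D^{-1}\big(D^3+2D{\sf U}+2{\sf U}D\big)\phi_x^{-1}$, as it should.
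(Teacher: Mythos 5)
Your proposal is correct and follows essentially the same route as the paper: the operator identity you reduce the corollary to is precisely the paper's Proposition~\ref{factor} specialized to ${\sf D}=\mathbb{D}$ and ${\sf V}=N(\phi_x)$ (your remark $\mathbb{D}N=N_x$ playing the role of the paper's fact $\widetilde{V}_x=\mathbb{D}\widetilde{V}$), and your commutation rules $[\mathbb{D},A_T]=A_{\mathbb{D}T}$, $[\mathbb{D},C_T]=C_{\mathbb{D}T}$ are exactly what that proposition's proof singles out as the crucial observation. The only difference is one of explicitness: the paper defers the expansion to an analogous computation in an earlier reference, whereas you carry it out in full, and your individual steps (the inner product, the local remainder $C_N C_{\sf U}$, the coefficient identity $2A_{N_x}-A_N^2-C_N^2=2A_{\sf U}$, and the monomial identity $A_N A_{N_x}+C_{N_x}C_N=A_{NN_x+N_xN}$) all check out.
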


\begin{proof} The corollary is an immediate consequence of Proposition \ref{factor}, the fact that $N(\phi_x) = \widetilde{V}$ and $\widetilde{V}_x= \mathbb{D}\widetilde{V}$.
\end{proof}

\begin{rem} Note that in the scalar case ${\sf U} = v_x -v^2$, i.e., $\sf U$ stems from a Miura transformation of a solution~$v$ to the mKdV equation. In the non-Abelian setting, we have ${\sf U} = \widetilde{V}_x -\widetilde{V}^2$, which means that~$\sf U$ stems from a (non-Abelian) Miura transformation of a~solution $\widetilde{V}$ to the amKdV equation~\eqref{mkdv2}.
\end{rem}

\begin{prop} \label{factor} Let $\sf D$ be a derivation, and $\sf U$, $\sf V$ related by the Miura transform ${\sf U} = {\sf D}{\sf V}-{\sf V}^2$. Then the following factorization holds:
 \begin{gather*}
 {\sf D}^3 + A_{\sf U} {\sf D} + {\sf D} A_{\sf U} + C_{\sf U} {\sf D}^{-1} C_{\sf U}
 = ({\sf D}-A_{\sf V})\big({\sf D}-C_{\sf V}{\sf D}^{-1} C_{\sf V}\big)({\sf D}+A_{\sf V}) \\
\hphantom{{\sf D}^3 + A_{\sf U} {\sf D} + {\sf D} A_{\sf U} + C_{\sf U} {\sf D}^{-1} C_{\sf U}}{}
= ({\sf D}-A_{\sf V}) ({\sf D} \pm C_{\sf V}) {\sf D}^{-1} ({\sf D} \mp C_{\sf V}) ({\sf D}+A_{\sf V}).
 \end{gather*}
\end{prop}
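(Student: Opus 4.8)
The plan is to establish the two claimed equalities by direct operator algebra, working entirely with the derivation $\sf D$ and treating $A_{\sf V}$, $C_{\sf V}$, $A_{\sf U}$, $C_{\sf U}$ as the left/right multiplication combinations they are. The key algebraic facts I would isolate first are the \emph{Leibniz-type commutation rules} between $\sf D$ and the multiplication operators. Since $\sf D$ is a derivation, for any operator $T$ one has ${\sf D}\,L_T = L_T{\sf D} + L_{{\sf D}T}$ and ${\sf D}\,R_T = R_T{\sf D}+R_{{\sf D}T}$; rephrasing in terms of commutators and anticommutators gives ${\sf D}\,A_{\sf V} = A_{\sf V}{\sf D} + A_{{\sf D}{\sf V}}$ and ${\sf D}\,C_{\sf V}=C_{\sf V}{\sf D}+C_{{\sf D}{\sf V}}$, together with the fact that $A$-type and $C$-type operators built from the \emph{same} $\sf V$ commute appropriately (because $L_{\sf V}$ and $R_{\sf V}$ commute, $A_{\sf V}$ and $C_{\sf V}$ commute, while $A_{\sf V}^2 - C_{\sf V}^2 = 4L_{\sf V}R_{\sf V}$ and products collapse as $A_{\sf V}C_{\sf V}=C_{\sf V}A_{\sf V}=L_{\sf V}^2-R_{\sf V}^2$). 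I would record these once at the start so the later expansions are mechanical.

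Second, I would prove the \emph{second} equality (the one with the $\pm$ signs) first, since it is the more structural of the two and the first equality then follows by absorbing the middle factor. The point is that $({\sf D}\pm C_{\sf V}){\sf D}^{-1}({\sf D}\mp C_{\sf V})$ simplifies: expanding and using ${\sf D}^{-1}C_{\sf V} = C_{\sf V}{\sf D}^{-1} - {\sf D}^{-1}C_{{\sf D}{\sf V}}{\sf D}^{-1}$ (the inverse form of the commutation rule) should collapse this sandwich to ${\sf D} - C_{\sf V}{\sf D}^{-1}C_{\sf V}$, independent of the choice of sign — exactly the middle factor appearing in the first equality. Verifying this sign-independence, and checking that the two cross terms $\pm$ cancel cleanly against the correction term coming from moving ${\sf D}^{-1}$ past $C_{\sf V}$, is the computational heart of this step.

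Third, with the middle factor identified, the two right-hand sides coincide, and it remains to expand $({\sf D}-A_{\sf V})({\sf D}-C_{\sf V}{\sf D}^{-1}C_{\sf V})({\sf D}+A_{\sf V})$ and match it against the left-hand side ${\sf D}^3 + A_{\sf U}{\sf D} + {\sf D}A_{\sf U} + C_{\sf U}{\sf D}^{-1}C_{\sf U}$. Here is where the Miura relation ${\sf U}={\sf D}{\sf V}-{\sf V}^2$ enters decisively: after expanding the triple product and pushing all bare $\sf D$'s to the appropriate side using the Leibniz rules, the terms quadratic in $\sf V$ must reorganize into $A_{\sf U}$ and $C_{\sf U}$ with ${\sf U}$ in precisely the combination ${\sf D}{\sf V}-{\sf V}^2$. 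I expect the $C_{\sf U}{\sf D}^{-1}C_{\sf U}$ term to emerge directly from the middle factor conjugated by $({\sf D}\mp A_{\sf V})$, while the $A_{\sf U}{\sf D}+{\sf D}A_{\sf U}$ terms assemble from the commutators generated when $({\sf D}-A_{\sf V})$ and $({\sf D}+A_{\sf V})$ are expanded against ${\sf D}^3$.

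\textbf{The main obstacle} I anticipate is bookkeeping in this last matching step: the anticommutator and commutator operators do not commute freely with $\sf D$, so every rearrangement spawns correction terms of the form $A_{{\sf D}{\sf V}}$ and $C_{{\sf D}{\sf V}}$, and one must track how these combine with the genuinely quadratic pieces $A_{{\sf V}^2}$, $C_{{\sf V}^2}$ to reconstitute $A_{\sf U}$ and $C_{\sf U}$ via ${\sf U}={\sf D}{\sf V}-{\sf V}^2$. The risk is sign or ordering errors in the $\sf D$-through-multiplication moves; I would mitigate this by fixing the convention of always pushing $\sf D$ to the right and collecting terms by their number of $\sf V$-factors and their $\sf D$-degree, checking each homogeneous block separately.
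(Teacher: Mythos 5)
Your plan is correct and takes essentially the same approach as the paper: the paper's own proof simply defers to the analogous computation in \cite[Proposition~14]{Carillo:Schiebold:JMP2009} and names as the ``crucial observation'' exactly the Leibniz-type identities ${\sf D}A_T = A_T{\sf D}+A_{{\sf D}T}$ and ${\sf D}C_T = C_T{\sf D}+C_{{\sf D}T}$ that you isolate at the outset, after which the expansion of the triple product and the regrouping of terms via ${\sf U}={\sf D}{\sf V}-{\sf V}^2$ (together with $L$--$R$ commutativity, which yields $C_{{\sf D}{\sf V}}C_{\sf V}+A_{\sf V}A_{{\sf D}{\sf V}}=A_{\{{\sf V},{\sf D}{\sf V}\}}$ and $C_{\sf V}A_{\sf V}=C_{{\sf V}^2}$) proceed as you describe. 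One minor simplification: the sign-independent collapse $({\sf D}\pm C_{\sf V}){\sf D}^{-1}({\sf D}\mp C_{\sf V})={\sf D}-C_{\sf V}{\sf D}^{-1}C_{\sf V}$ is pure telescoping and requires no commutation rule at all.
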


\begin{proof}The proof of the proposition is completely analogous to the proof of the factorization of the non-Abelian KdV recursion operator on the image of a~Miura transformation in \cite[Proposition~14]{Carillo:Schiebold:JMP2009}. The crucial observation is that due to the product rule the identities
\begin{gather*}
 {\sf D} A_T = A_T {{\sf D}} + A_{{\sf D}T} , \qquad {\sf D} C_T = C_T {{\sf D}} + C_{{\sf D}T}
\end{gather*}
hold for any derivation ${\sf D}$.
\end{proof}

{\sloppy There are close algebraic relations between the non-Abelian recursion operators of the amKdV, KdV sing.\ and int.\ soliton KdV equations.

}

\begin{cor} For $\widetilde V =\frac{1}{2} \phi_x^{-1} \phi_{xx}$ and $S=\phi_x$, the recursion operators in Theorem~{\rm \ref{t1}} and the recursion operators in~\eqref{rec sing},~\eqref{rec int} are pairwise related. More precisely, it holds
 \begin{gather*}
 \widetilde \Psi\big(\widetilde V\big) = P_1P_2= Q_1 Q_2, \qquad \chi(S) = Q_2 Q_1 = R_1 R_2, \qquad \Upsilon(\phi) = P_2P_1 = R_2 R_1,
 \end{gather*}
 where
 \begin{alignat*}{3}
& P_1 = \big(\mathbb{D}+C_{\widetilde V}\big) \big(\mathbb{D}+A_{\widetilde V}\big) L_{\phi_x}^{-1}, \qquad &&
 		 P_2 = L_{\phi_x} \mathbb{D}^{-1} \big(\mathbb{D}-A_{\widetilde V}\big) (\mathbb{D}- C_{\widetilde V}) \mathbb{D}^{-1}, & \\	
&	Q_1 = \big(\mathbb{D}+C_{\widetilde V}\big) L_{\phi_x}^{-1} , \qquad && Q_2 = L_{\phi_x} \big(\mathbb{D}-A_{\widetilde V}\mathbb{D}^{-1} A_{\widetilde V}\big)\big(\mathbb{D}-C_{\widetilde V}\big) \mathbb{D}^{-1} ,& \\
&	R_1 = L_{\phi_x} \big(\mathbb{D}+A_{\widetilde V}\big) L_{\phi_x}^{-1}, \qquad && R_2 = L_{\phi_x} \mathbb{D}^{-1} \big(\mathbb{D}-A_{\widetilde V}\big) \big(\mathbb{D} -	 C_{\widetilde V}\mathbb{D}^{-1}C_{\widetilde V}\big)L_{\phi_x}^{-1}.&
 \end{alignat*}
\end{cor}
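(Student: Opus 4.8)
The plan is to verify the six operator identities by direct algebraic manipulation: each asserts that an explicit product of the building blocks $\mathbb{D}$, $A_{\widetilde V}$, $C_{\widetilde V}$ and $L_{\phi_x}^{\pm1}$ equals one of the three recursion operators. First I would normalise notation. Under the hypotheses $\widetilde V=\frac12\phi_x^{-1}\phi_{xx}$ and $S=\phi_x$ one has $N(\phi_x)=N(S)=\widetilde V$ and $L_S=L_{\phi_x}$, so the operators of Theorem~\ref{t rec ops} may be written entirely in terms of $\widetilde V$; in particular $\Upsilon(\phi)=L_{\phi_x}\mathbb{D}^{-1}(\mathbb{D}-A_{\widetilde V})(\mathbb{D}-C_{\widetilde V}\mathbb{D}^{-1}C_{\widetilde V})(\mathbb{D}+A_{\widetilde V})L_{\phi_x}^{-1}$ and $\chi(S)=L_{\phi_x}(\mathbb{D}-A_{\widetilde V}\mathbb{D}^{-1}A_{\widetilde V})(\mathbb{D}-C_{\widetilde V}\mathbb{D}^{-1}C_{\widetilde V})L_{\phi_x}^{-1}$, while $\widetilde\Psi(\widetilde V)$ is taken in the form supplied by Theorem~\ref{t1}.

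Next I would isolate the only non-trivial ingredient, the ``collapse'' identities already recorded in the proof of Theorem~\ref{t rec ops}, namely $(\mathbb{D}-A_{\widetilde V})\mathbb{D}^{-1}(\mathbb{D}+A_{\widetilde V})=\mathbb{D}-A_{\widetilde V}\mathbb{D}^{-1}A_{\widetilde V}$ and the analogue with $C$ in place of $A$. The single extra remark needed is that these identities are insensitive to the order of the two signs: expanding $(\mathbb{D}\pm A_{\widetilde V})\mathbb{D}^{-1}(\mathbb{D}\mp A_{\widetilde V})$ the two cross terms $\pm A_{\widetilde V}$ cancel no matter which factor carries the plus sign, using only $\mathbb{D}\mathbb{D}^{-1}=\mathbb{D}^{-1}\mathbb{D}=I$ and no derivation property; hence $(\mathbb{D}+A_{\widetilde V})\mathbb{D}^{-1}(\mathbb{D}-A_{\widetilde V})$ also equals $\mathbb{D}-A_{\widetilde V}\mathbb{D}^{-1}A_{\widetilde V}$, and similarly for $C$.

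With these in hand each of the six claims is a one-line computation in which the adjacent $L_{\phi_x}^{\pm1}$ cancel and a single collapse is applied. For example, in $P_1P_2$ the factors $L_{\phi_x}^{-1}L_{\phi_x}$ cancel and the middle triple $(\mathbb{D}+A_{\widetilde V})\mathbb{D}^{-1}(\mathbb{D}-A_{\widetilde V})$ collapses to $\mathbb{D}-A_{\widetilde V}\mathbb{D}^{-1}A_{\widetilde V}$, which is exactly the triple occurring in $\widetilde\Psi(\widetilde V)$ once the form of Theorem~\ref{t1} is used, giving $P_1P_2=\widetilde\Psi(\widetilde V)$; the identity $Q_1Q_2=\widetilde\Psi(\widetilde V)$ is then immediate because $Q_2$ already carries the collapsed factor. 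The remaining four equalities $\chi(S)=Q_2Q_1=R_1R_2$ and $\Upsilon(\phi)=P_2P_1=R_2R_1$ are proved identically, using the $C$-collapse for $Q_2Q_1$ and $P_2P_1$, the $A$-collapse for $R_1R_2$, and only the $L$-cancellation for $R_2R_1$. I would present these as a compact list of six lines rather than expanding each.

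I do not anticipate any genuine obstacle; the content is pure bookkeeping. The one point requiring care is matching each claim to the correct sign-ordering of the collapse identity and tracking which $L_{\phi_x}^{\pm1}$ pair up. It is worth closing with the structural observation that the three pairs of identities $\widetilde\Psi(\widetilde V)=P_1P_2$, $\Upsilon(\phi)=P_2P_1$ (and likewise with $Q$ and $R$) exhibit $\widetilde\Psi(\widetilde V)$, $\chi(S)$ and $\Upsilon(\phi)$ as pairwise \emph{related} in the generalized sense of the Remark following Theorem~\ref{t1} --- which is the real point of the corollary and the mechanism by which solutions are transported between the corresponding hierarchies.
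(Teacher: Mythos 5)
Your proposal is correct and is precisely the verification the paper leaves implicit: the corollary is stated without proof, being a direct consequence of Theorem~\ref{t1}, Theorem~\ref{t rec ops}, and the sign-insensitive collapse identities $(\mathbb{D}\pm A_{\widetilde V})\mathbb{D}^{-1}(\mathbb{D}\mp A_{\widetilde V})=\mathbb{D}-A_{\widetilde V}\mathbb{D}^{-1}A_{\widetilde V}$ (and the $C$-analogue) that the paper itself already invokes in proving Theorem~\ref{t rec ops}. Your six one-line computations, the matching of sign-orderings, and the cancellation of the $L_{\phi_x}^{\pm 1}$ factors all check out.
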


\section{Hereditariness and hierarchies}\label{section5}

The links obtained in the last two sections are summarized in
\begin{gather*} 
\boxed{\mbox{mKdV}(V) } \ {\buildrel B_2, B_3 \over{\text{\textendash\textendash}}} \
\boxed{\mbox{amKdV}(\widetilde V)} \ {\buildrel B_4\over{\text{\textendash\textendash}}} \
\boxed{\mbox{KdV sing.}(\phi)}\ {\buildrel B_5 \over{\text{\textendash\textendash}}} \
\boxed{\mbox{int.\ soliton KdV}(S)}\,,
\end{gather*}
where the B\"acklund transformations $B_2$, $B_3$, $B_4$ and $B_5$ are given in \eqref{Bmkdvs1}, \eqref{Bmkdvs2}, \eqref{B-tildeVphi} and \eqref{BSphi}, respectively. Applying the respective recursion operators, the B\"acklund chart extends to the whole hierarchies as follows
\begin{gather} \label{BC-nc part 2 H}
\boxed{V_t = [\Psi(V)]^n V_{x}}		\buildrel B_2, B_3 \over{\text{\textendash\textendash}}
\boxed{\widetilde V_t = [\widetilde \Psi(\widetilde V)]^n \widetilde V_{x}}	
\buildrel B_4 \over{\text{\textendash\textendash}}
\boxed{\phi_t = [\Upsilon(\phi)]^n \phi_{x}}	\buildrel B_5 \over{\text{\textendash\textendash}}
\boxed{S_t = [\chi(S)]^n S_x}\, .
\end{gather}

As known from the scalar case, hereditariness is a crucial property of recursion operators \cite{Fuchssteiner1979, Magri}. Unfortunately its direct verif\/ication often requires involved computations, in particular in the non-Abelian case. The following proposition uses the B\"acklund links established in the present article to avoid computations by reducing the proof of hereditariness to the hereditariness of the non-Abelian KdV recursion operator, which is proved in~\cite{Schiebold2010}.

\begin{prop} \label{p4} Each of the recursion operators $\Psi$ from the B\"acklund chart \eqref{BC-nc} has the following properties:
\begin{enumerate}\itemsep=0pt
 \item[$a)$] $\Psi$ is hereditary,
 \item[$b)$] $\Psi$ is a strong symmetry for all equations of the hierarchy generated by $\Psi$.
\end{enumerate}
\end{prop}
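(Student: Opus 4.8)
The plan is to establish both properties simultaneously for every recursion operator in the chart \eqref{BC-nc} by transporting them along the B\"acklund links from the single operator for which hereditariness is already known, namely the non-Abelian KdV recursion operator $\Phi(U)$ of \cite{Schiebold2010}. The conceptual engine is the standard result of Fokas and Fuchssteiner \cite{FokasFuchssteiner:1981, Fuchssteiner1979}: if two evolution equations are related by a B\"acklund transformation $B$ with transformation operator $\Pi_B$, and one of them admits a hereditary strong symmetry $\Phi$, then the conjugate $\Pi_B\,\Phi\,\Pi_B^{-1}$ is again hereditary and is a strong symmetry for the partner equation. Since the entire excerpt has been building the explicit conjugation relations $\hat\Psi(W)=D^{-1}\Phi(U)D$, $\Psi(V)=\Pi_M\Phi(U)\Pi_M^{-1}$, $\widetilde\Psi(\widetilde V)=\Pi\,\Psi(V)\,\Pi^{-1}$, $\Upsilon(\phi)=\Pi_4\,\widetilde\Psi(\widetilde V)\,\Pi_4^{-1}$, and $\chi(S)=\Pi_5\,\Upsilon(\phi)\,\Pi_5^{-1}$, every operator in the chart is a finite composition of invertible conjugations applied to $\Phi(U)$.

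First I would recall precisely the invariance statement I am invoking: hereditariness and the strong-symmetry property are preserved under conjugation by the transformation operator of a B\"acklund transformation. I would cite this as the content of \cite{FokasFuchssteiner:1981, Fuchssteiner1979} and note that it is exactly the mechanism described in the introduction of the present article. Second, I would fix $\Phi(U)$ from \eqref{kdv-recop} as the base case, invoking \cite{Schiebold2010} for its hereditariness and for the fact that it is a strong symmetry of the non-Abelian KdV equation \eqref{nc-kdv}. Third, I would walk once through the chart, at each link naming the relevant transformation operator $\Pi_{B_1}=D^{-1}$, $\Pi_M$, $\Pi=\Pi_3\Pi_2$, $\Pi_4$, $\Pi_5$ already computed in the earlier sections, and concluding by the invariance principle that $\hat\Psi(W)$, $\Psi(V)$, $\widetilde\Psi(\widetilde V)$, $\Upsilon(\phi)$, and $\chi(S)$ are each hereditary strong symmetries of their respective equations. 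Because each $\Pi$ is invertible (the inverses are exhibited in the proofs of Theorems~\ref{t1} and~\ref{t rec ops}), the transport is legitimate at every step.

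For part $b)$, once $\Psi$ is a strong symmetry of the base equation of its hierarchy, the fact that it is a strong symmetry for every member $u_t=\Psi^n u_x$ follows from the general principle that a hereditary operator which is a strong symmetry of one flow is automatically a strong symmetry of the whole hierarchy it generates \cite{Fuchssteiner1979}; this is what justifies writing the hierarchies in \eqref{BCNH3} and \eqref{BC-nc part 2 H}. I would therefore derive $b)$ from $a)$ together with the base strong-symmetry property rather than proving it link by link.

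The main obstacle I anticipate is not the algebra but the regularity bookkeeping underlying the invariance principle. The B\"acklund transformations here are not globally defined point maps: they involve nonlocal operators such as $D^{-1}$ and conjugations $K_G$ with an auxiliary invertible function $G$, and the identity $\widetilde V=K_G V$ used throughout holds only in the weaker form noted after \eqref{h1} (where $\widetilde V_{xx}=K_G V_{xx}$ fails). The delicate point is thus to check that the hypotheses of \cite{FokasFuchssteiner:1981} are genuinely met at each link -- that $\Pi_B$ is a well-defined invertible transformation operator intertwining the two flows on the relevant function space, and that the conjugation does not introduce obstructions in the non-Abelian setting. Since the earlier sections have already verified the intertwining relations explicitly (for instance the computation culminating in \eqref{h2} that $\Pi^{-1}\widetilde V_t=V_t$), I expect this to reduce to invoking those verifications, but it is the step that requires care rather than routine computation.
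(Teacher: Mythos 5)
Your proposal is correct and takes essentially the same route as the paper: anchor at the non-Abelian KdV recursion operator $\Phi(U)$, whose hereditariness and strong-symmetry property come from \cite{Schiebold2010}, then transport both properties along the B\"acklund links of the chart by the Fokas--Fuchssteiner invariance principle, extending to whole hierarchies via \cite{Fuchssteiner1979}. The only cosmetic differences are that the paper anchors the strong-symmetry base case at the trivial flow $U_t=U_x$ through the identity $[D,\Phi(U)]=\Phi'(U)[U_x]$ of \cite[(32)]{Schiebold2010}, and cites \cite{FoFu2} for the preservation of properties a) and b) under B\"acklund transformations rather than spelling out the link-by-link conjugations as you do.
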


\begin{proof}
In \cite[(32)]{Schiebold2010} it is verif\/ied that the non-Abelian KdV recursion operator \eqref{kdv-recop} satisf\/ies the identity
\begin{gather*} [D, \Phi (U)] = \Phi'(U)[U_x] , \end{gather*}
implying that $\Phi(U)$ is a strong symmetry (recursion operator in the sense of~\cite{Olver}) for the trivial member $U_{t}=U_x$ of the non-Abelian KdV hierarchy \cite{Fuchssteiner1979}. Moreover, the main result in \cite{Schiebold2010} is that $\Phi(U)$ is hereditary. Hence $\Phi(U)$ is a strong symmetry for all equations of the non-Abelian KdV hierarchy \cite{Fuchssteiner1979}. As shown in \cite{FoFu2}, the properties of a) and b) are preserved under B\"acklund transformations.
\end{proof}

Each of the hierarchies in \eqref{BC-nc part 2 H} is of the form
\begin{gather} \label{hier}
 U_t = [\Phi(U)]^n U_x .
\end{gather}
We may rewrite the right-hand side of \eqref{hier} as
\begin{gather*} [\Phi(U)]^n U_x = X_n(U) ,\end{gather*}
where $X_n$ is a vector f\/ield on the space of $x$-dependent operator-valued functions (see \cite{Schiebold2010} for details). The main consequence of Proposition~\ref{p4} is

\begin{cor} Let $X_1, X_2,\ldots $ be the vector fields of one of the hierarchies in \eqref{hier}. Then we have
 \begin{gather*} [X_m,X_n] = 0 \end{gather*}
for all $m,n =1,2,\ldots$.
\end{cor}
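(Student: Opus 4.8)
The plan is to deduce the corollary directly from Proposition~\ref{p4}, using the classical fact that a hereditary operator which is a strong symmetry of its seed vector field generates a hierarchy of mutually commuting symmetries (see \cite{Fuchssteiner1979, FokasFuchssteiner:1981}). The two hypotheses needed are precisely the conclusions a) and b) of Proposition~\ref{p4}: hereditariness of $\Phi$, and the fact that $\Phi$ is a strong symmetry for \emph{every} member $X_n=[\Phi(U)]^n U_x$ of the hierarchy. Recall that, in the convention used in the proof of Proposition~\ref{p4} (where the strong symmetry condition for the seed $X_0=U_x$ reads $[D,\Phi(U)]=\Phi'(U)[U_x]$), the strong symmetry of $\Phi$ for $X_m$ is equivalent to the vanishing of the Lie derivative $L_{X_m}\Phi=0$, and hence to the statement that $L_{X_m}$ commutes with the application of $\Phi$, i.e.\ $L_{X_m}(\Phi Y)=\Phi\,(L_{X_m}Y)$ for every vector field $Y$.

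Granting this, the computation is short. Writing $X_0=U_x$ and $X_n=\Phi^n X_0$, and identifying the vector-field bracket with the Lie derivative, $[X_m,X_n]=L_{X_m}X_n$, I would first pull $\Phi^n$ through the Lie derivative by iterating statement b) for $X_m$:
\[
 [X_m,X_n] = L_{X_m}\big(\Phi^n X_0\big) = \Phi^n\, L_{X_m}X_0 = \Phi^n\,[X_m,X_0].
\]
I would then reduce $[X_m,X_0]$ by antisymmetry of the bracket together with statement b) for $X_0$:
\[
 [X_m,X_0] = -[X_0,X_m] = -L_{X_0}\big(\Phi^m X_0\big) = -\Phi^m\, L_{X_0}X_0 = -\Phi^m\,[X_0,X_0] = 0.
\]
Hence $[X_m,X_n]=\Phi^n\cdot 0=0$ for all $m,n=1,2,\ldots$, which is the assertion.

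Since all the analytic and algebraic substance---the hereditariness of the non-Abelian KdV recursion operator from \cite{Schiebold2010} and its transfer along the B\"acklund chart via \cite{FoFu2}---has already been packaged into Proposition~\ref{p4}, the remaining work is purely formal. The only point requiring care is the precise meaning of the above manipulations in the operator-valued (non-Abelian) setting: one must confirm that the $X_n$ and their Fr\'echet derivatives are well defined as vector fields on the space of $x$-dependent operator-valued functions (as set up in \cite{Schiebold2010}), and that "strong symmetry" is correctly read as $L_{X_m}\Phi=0$ so that $L_{X_m}$ genuinely commutes with $\Phi$ through the Leibniz rule $L_{X_m}(\Phi Y)=(L_{X_m}\Phi)Y+\Phi(L_{X_m}Y)$. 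The hard part has effectively been discharged upstream in Proposition~\ref{p4}; I expect no further obstacle in the commuting-flows argument itself beyond bookkeeping of these definitions.
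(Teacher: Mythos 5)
Your proof is correct and takes essentially the same approach as the paper: the paper's entire proof is a citation of the arguments in Section~VI of \cite{Schiebold2010}, and those arguments are exactly the standard computation you reconstruct from Proposition~\ref{p4}, namely reading the strong-symmetry property as $L_{X_m}\Phi=0$ and then reducing $[X_m,X_n]=\Phi^{n}[X_m,X_0]=-\Phi^{n}\Phi^{m}[X_0,X_0]=0$. The only content not written out in either version is the functional-analytic bookkeeping you already flag, which is what the cited reference supplies.
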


The proof uses arguments explained in \cite[Section VI]{Schiebold2010}.

\section{Remarks, perspectives and open problems} \label{rems}
This section collects some remarks on the results previously presented together with some perspectives study and open problems. The chain of B\"acklund transformations we obtained, represents a not at all trivial generalization to the operator level of the corresponding one~\cite{Fuchssteiner:Carillo:1989a} which links the scalar pKdV, KdV, mKdV, KdV interacting soliton and KdV singularity mani\-fold hierarchies. Furthermore, it generalizes the non-Abelian B\"acklund chart in \cite{Carillo:Schiebold:JMP2009} since it connects further non-Abelian hierarchies. Specif\/ically, new non-Abelian equations, and, then, the corresponding hierarchies, arise, such as the amKdV, in~\eqref{mkdv2}.

\bigskip\noindent{\bf Remarks}
\begin{itemize}\itemsep=0pt
\item The two B\"acklund charts, respectively, in the Abelian \cite{Fuchssteiner:Carillo:1989a} (see Appendix) and the non Abelian case, connect the pKdV, KdV, mKdV, KdV-singularity manifold and Interacting Soliton KdV equations or their non-Abelian analogs.
\item Known the recursion operator of a nonlinear evolution equation then all the other nonlinear evolution equations linked to it via a B\"acklund chart admit a recursion operator. The latter can be constructed in the Abelian as well as in the non-Abelian case.
\item All nonlinear evolution equations in the same B\"acklund chart share all the structural pro\-perties which are preserved under B\"acklund transformations as soon as a~single equation, in it, enjoys them. Remarkable is the hereditariness of the recursion operators~\cite{Schiebold2010}.
\item Also in the non-Abelian operator case, given the hereditary recursion operators, the B\"acklund chart can be extended to the corresponding generated hierarchies. Again, the B\"acklund chart relates the corresponding members of each one of the involved hierarchies of nonlinear evolution equations.
\item Even if there are similarities between the Abelian scalar case and the non-Abelian operator case, in the second case the structure is richer. Indeed, two distinct non-Abelian mKdV equations appear: the non-Abelian mKdV and an {\it alternative} mKdV equations which do coincide in the Abelian case. Correspondingly, when commutativity is assumed, combination of the B\"acklund transformations $B_2$ and $B_3$ produces the identity transformation and, hence, the Abelian B\"acklund chart~\cite{Fuchssteiner:Carillo:1989a} is recovered.
\item A similar behavior can be observed also when the Cole--Hopf link connecting Burgers equation to linear heat equation is extended to the non-Abelian case \cite{SIMAI2008, Carillo:Schiebold:JNMP2012}. The heat equation is connected to two dif\/ferent Burgers equations, termed Burgers and mirror Burgers equations in \cite{CMS-2016a, Ku}. Then, recursion operators and the corresponding hierarchies follow from the Cole--Hopf link, which can be regarded as a special case of B\"acklund transformation.
\end{itemize}

\noindent{\bf Perspectives and open problems}
\begin{itemize}\itemsep=0pt
\item We expect a similar situation when the 5th order nonlinear evolution equations which appear in the B\"acklund chart in \cite{BS1, Rogers:Carillo:1987b} are extended to the non-Abelian case. This study is currently under investigation and we are devising also computer aided routines to check the algebraic properties of the recursion operators. Indeed, already in the Abelian case, the computations involved are very long and complicated.
\item Furthermore, if a nonlinear evolution equation admits a Hamiltonian and bi-Hamiltonian structure, related to the recursion operator \cite{[12], {Fuchssteiner:Carillo:1990a}, Benno-Walter, Magri}, then all nonlinear evolution equations in the same B\"acklund chart admit a Hamiltonian and bi-Hamiltonian structure.
\item The approach can be extended also when nonlinear evolution equations in $(2+1)$ dimensions are considered, namely, the unknown function is supposed to depend on two space variables rather than on a single space variable. Thus, in \cite{walsan1}, the Kadomtsev--Petviashvili (KP), modif\/ied Kadomtsev--Petviashvili (mKP) and $(2+1)$-dimensional Harry Dym equations, which represent, in turn, the $(2+1)$-dimensional analog of KdV, mKdV, and Harry Dym equations are all connected via B\"acklund transformations. Notably, the connection among their $(1+1)$-dimensional corresponding equations \cite{walsan2} follows on imposing suitable constraints to the $(2+1)$-dimensional ones.
\end{itemize}

\appendix
\section{Appendix} \label{ex-sec2}

In this Appendix some background def\/initions which are of use throughout the whole article are brief\/ly recalled in the opening subsection: this choice is due to the lack of uniqueness in many def\/initions in the literature. In the following subsection, some results strictly connected to the present investigation are retrieved.

\subsection{Background notions}

Here, to improve readability, some fundamental def\/initions are provided. First of all, we recall the notion of B\"acklund transformation, according to the def\/inition of Fokas and Fuchssteiner~\cite{FokasFuchssteiner:1981} (see also the book by Rogers and Shadwick~\cite{RogersShadwick}).
Consider non linear evolution equations of the type
\begin{gather}\label{1}
u_t = K ( u ),
\end{gather}
where the unknown function $u$ depends on the independent variables~$x$,~$t$ and, for f\/ixed~$t$, \mbox{$u (x,t) \in M$}, a~manifold modeled on a linear topological space so that the generic {\it fiber} $T_uM$, at $u\in M$, can be identif\/ied with $M$ itself\footnote{It is generally assumed that~$M$ is the space of functions $u(x,t)$ which, for each f\/ixed~$t$, belong to the Schwartz space $S$ of {\it rapidly decreasing functions} on ${\mathbb R}^n$, i.e., $S({\mathbb R}^n):=\{ f\in C^\infty({\mathbb R}^n)\colon \| f \|_{\alpha,\beta} < \infty, \forall \alpha,\beta\}$, where $\| f \|_{\alpha,\beta}:= \sup\limits_{x\in{\mathbb R}^n} \left\vert x^\alpha D^\beta f(x) \right\vert $, and $D^\beta:=\partial^\beta /{\partial x}^\beta$. In the present section $n=1$.}, and $ K \colon M \rightarrow TM$, is an appropriate $ C^{\infty}$ vector f\/ield on a~mani\-fold $ M$ to its tangent manifold~$TM$.

Let, now
\begin{gather}\label{2}
v_t = G (v)
\end{gather}
denote a second nonlinear evolution equation. Accordingly, in turn, when $ u (x,t) \in M_1$, $v (x,t) \in M_2 $ and $M_1$, $M_2$ represent manifolds modeled on a~linear topological space then, $ K \colon M_1 \rightarrow TM_1 $ and $ G\colon M_2 \rightarrow TM_2 $ denote appropriate $ C^{\infty}$-vector f\/ields on the manifolds $M_i$, $i=1,2$,
\begin{alignat*}{4}
& u_t = K ( u ),\qquad && K \colon \ M_1 \rightarrow TM_1,\qquad && u \colon \ (x,t) \in {\mathbb R} \times {\mathbb R}\to u (x,t) \in M_1, & \\
& v_t = G (v), \qquad && G \colon \ M_2 \rightarrow TM_2, \qquad && v \colon \ (x,t) \in {\mathbb R} \times {\mathbb R} \to v (x,t) \in M_2 .&
\end{alignat*}
As usual, when soliton solutions are considered, the further assumption $M:= M_1\equiv M_2$ is adopted. Then, the def\/inition of B\"acklund transformation, according to~\cite{FokasFuchssteiner:1981}, can be stated as follows.

\begin{Definition} Given two evolution equations, $ u_t = K (u)$ and $v_t = G (v)$, then $B (u , v) = 0$ represents a B\"acklund transformation between them if, whenever two solutions of these equations are given, let denote them $u(x,t)$ and $v(x,t)$ respectively, such that $B (u, v) = 0$ at the initial time $t=0$, then this holds true for all times, namely,
\begin{gather*}
B (u(x,t), v(x,t)) \vert_{ t=0 } = 0 \quad \Longrightarrow \quad
B (u(x,t),v(x,t) )\vert_{t=\bar t} = 0, \qquad \forall\, \bar t >0, \quad \forall \, x\in{\mathbb R}.
\end{gather*}
\end{Definition}

Therefore, the B\"acklund transformation $B$ establishes a~correspondence between solutions of the evolution equations it connects. This link is graphically represented as
\begin{gather*}
\boxed{u_t = K (u)} \ {\buildrel B \over {\text{\textendash\textendash}}} \ \boxed{ v_t = G (v) }
\end{gather*}
and, then, if the nonlinear evolution equation $u_t = K (u)$ admits a hereditary recursion opera\-tor~$\Phi (u)$ \cite{FokasFuchssteiner:1981}, namely,
\begin{gather*}
u_t = \Phi( u ) u_x, \qquad \text{where} \quad K (u) = \Phi (u) u_x,
\end{gather*}
also the equation $v_t=G(v)$ admits a hereditary recursion operator, say~$\Psi(v)$. The B\"acklund transformation $B$ allows to f\/ind $\Psi(v)$ via
\begin{gather}\label{transf-op}
\Psi(v)= \Pi \Phi (u) \Pi^{ -1}, \qquad \text{where} \quad \Pi : = -B_v^{ -1} B_u, \qquad \Pi \colon \ T M_1 \rightarrow T M_2,
\end{gather}
where $B_u$ and $B_v$ denote the Frechet derivatives of the B\"acklund transformation $B(u,v)$. The B\"acklund transformation~$B$~\cite{FokasFuchssteiner:1981}, via relation~(\ref{transf-op}), guarantees that $\Psi(v)$ represents the here\-ditary recursion operator admitted by the equation $v_t=G(v)$. Hence, the two hierarchies $u_{t} = [\Phi (u) ]^{n} u_x$ and $v_{t }= [\Psi (v)]^{n} v_x$, $n\geq 0$, of nonlinear evolution equations can be constructed~ \cite{Fuchssteiner1979} and their {\it base members} equations are, in turn, (\ref{1}) and~(\ref{2}). Each equation in these two hierarchies, parametrized\footnote{Here and throught ${\mathbb N}_0:= {\mathbb N}\cup\{0\}$.} by $n\in{\mathbb N}_0$, is connected, via the same B\"acklund transformation $B$, to the corresponding equation, parametrized by the same~$n$, in the other hierarchy. This extension to the whole hierarchies is depicted in the following B\"acklund chart
\begin{gather*}
\boxed{u_{} = [\Phi (u) ]^{n} u_x } \ {\buildrel B \over {\text{\textendash\textendash}}} \ \boxed{ v_{t} = [\Psi (v) ]^{n} v_x}\, ,
\end{gather*}
which emphasizes that the link between the two equations (\ref{1}) and (\ref{2}) is inherited\footnote{These recursion operators are termed {\it hereditary} in~\cite{Fuchssteiner1979} to stress that the property to be recursion operator, established referring to the base member equation, is inherited by all the other members in the hierarchy of nonlinear evolution equations.} by each member of the two hierarchies generated, respectively, by the recursion operators~$\Phi$ and~$\Psi$.

\subsection{Abelian KdV connected B\"acklund chart}
Here some results, obtained in \cite{Fuchssteiner:Carillo:1989a}, concerning the links among nonlinear evolution equations whose base member is a 3-rd order equation, are recalled. An overview on further previous result is given in~\cite{CarilloActaAM2012}. Thus, the B\"acklund chart here below, (\ref{BC1*}), allows to f\/ind solutions of Harry Dym equation~\cite{BS4}. This method to f\/ind solutions is not restricted to third order nonlinear evolution equations; indeed, for instance, also solutions of Burgers equation are obtained via B\"acklund transformations (see \cite{Cole:1951, Guo:Carillo, Hopf:1950, LeviRB}) to mention a few of them.

The following B\"acklund chart summarizes links relating 3rd order nonlinear evolution equations, when the links in \cite{Fuchssteiner:Carillo:1989a} are combined with the link between the KdV and potential KdV (pKdV),
\begin{gather}
\mbox{\small $\boxed{\!\text{pKdV}(q)\!}\, {\buildrel (a) \over{\text{\textendash\textendash}} } \, \boxed{\!\text{KdV}(u)\!}\, {\buildrel (b)
\over{\text{\textendash\textendash}}} \, \boxed{\!\text{mKdV}(v)\! } \, {\buildrel (c) \over{\text{\textendash\textendash}}} \, \boxed{\!\text{KdV~sing.}(\varphi)\!}
 {\buildrel (d) \over{\text{\textendash\textendash}}}\, \boxed{\!\text{int. soliton KdV}(s) \!} \,
{\buildrel (e) \over{\text{\textendash\textendash}}} \, \boxed{\!\text{Dym}(\rho)\!}\,,$}\label{BC1*}
\end{gather}
wherein the third order nonlinear evolution equations are, in turn, given by
\begin{alignat*}{3}
& q_t = q_{xxx} + 3 q^2_x \qquad && \text{(pKdV)}, & \\
& u_t = u_{xxx} + 6 uu_x \qquad && \text{(KdV)}, & \\
& v_t = v_{xxx} - 6 v^2 v_x \qquad && \text{(mKdV)},& \\
& \varphi_t = \varphi_x \{ \varphi ; x\} , \quad \text{where} \ \ \{ \varphi ; x \} :=
 \left( { \varphi_{xx} \over \varphi_x} \right)_x -
{1 \over 2 }\left({ \varphi_{xx} \over \varphi_x} \right)^2 \qquad && \text{(KdV~sing.)}, & \\
& s^2 s_t = s^2 s_{xxx} - 3 s s_x s_{xx}+ {3 \over 2 }{s_x}^3 \qquad && \text{(int.\ soliton~KdV)}, & \\
& \rho_t = \rho^{3} \rho_{\xi \xi \xi} \qquad && \text{(Dym)},&
\end{alignat*}
in addition, the B\"acklund transformations which link them, are, respectively:
 \begin{gather*}
(a) \ \ u-q_x = 0 ,\qquad (b) \ \ u + v_x + v^2 =0 , \qquad (c) \ \ v -{ {1 \over 2 }{ \varphi_{xx} \over \varphi_x} } = 0,\\
(d) \ \ s - \varphi_x =0, \qquad (e) \ \ {\bar x} : = D^{-1} s (x) \qquad \text{where} \quad D^{-1}:=
\int_{-\infty}^x d\xi .
\end{gather*}
The transformation $(e)$ is a {\it reciprocal\/}\footnote{Reciprocal-type transformations were originally introduced in $(2+0)$-dimensional isentropic gasdynamics by Bateman \cite{Bateman1, Bateman2} and in $(1+1)$-dimensional anisentropic gasdynamics by Rogers~\cite{Rogers}. Connection with invariance of these gasdynamic systems under B\"acklund transformations is described in \cite{RogersShadwick}.} transformation \cite{Bateman2, RogersShadwick}: according to \cite{BS1,Fuchssteiner:Carillo:1989a}, it represents a B\"acklund transformation whose transformation operator~$\Pi$ can be constructed, provided it is considered an extended manifold which comprises also the independent space variable $x$, further to the dependent variable~$u$. That is, while the B\"acklund transformations $(a)$, $(b)$ and $(c)$ admit operators~$\Pi$
\begin{gather*}
\Pi_a \colon \ T M_u \rightarrow T M_v, \qquad \Pi_b \colon \ T M_v \rightarrow T M_\varphi, \qquad \Pi_c\colon \ T M_\varphi \rightarrow T M_s,
\end{gather*}
when the reciprocal transformation $(e)$ is considered, the corresponding operator is
\begin{gather*}
 \Pi_e \colon \ T M_{(x,s)} \rightarrow T M_{(\bar x,\rho)}.
\end{gather*}
The B\"acklund chart \eqref{BC1*} allows to f\/ind solutions to initial boundary value problems of the Dym equation~\cite{Fuchssteiner:Carillo:1989a, BS4}. Furthermore, the links depicted in the B\"acklund chart relate each member of the involved hierarchies to the corresponding member of all the other hierarchies in the B\"acklund chart. Composition of B\"acklund transformations in~\eqref{BC1*} allows to retrieve and generalize to the whole hierarchies, the direct B\"acklund transformation between the KdV and Dym equations in~\cite{Guo:Rogers}.

The B\"acklund chart \eqref{BC1*} turned out to be very helpful in revealing new invariances enjoyed by any of the equations which appear in the B\"acklund chart itself. Indeed, let
\begin{gather*}
\{ \varphi ; x \} = \left( { \varphi_{xx} \over \varphi_x} \right)_x - {1 \over 2 }\left({ \varphi_{xx} \over \varphi_x} \right)^2
\end{gather*}
denote the {\it Schwarzian derivative}, then, the KdV Singularity equation $\varphi_t = \varphi_x \{ \varphi ; x\}$ is invariant under the M\"obius group of transformations
 \begin{gather*}
 \widetilde\varphi={{a\varphi+b}\over{c\varphi+d}},\qquad a,b,c,d\in \mathbb{C} \qquad \text{such that} \quad ad-bc\neq 0,
\end{gather*}
since the Schwarzian derivative of $\varphi$ with respect to $x$, $ \{ \varphi ; x\}$, enjoys such a property.

The invariance, under the M\"obius group of transformations of all the members of the KdV Singularity hierarchy allowed to recover the {\it Invariance} $I$, enjoyed by the Dym hierarchy~\cite{Fuchssteiner:Carillo:1989a} and to f\/ind a new invariance enjoyed by the Kawamoto equation~\cite{Rogers:Carillo:1987b} in the case of the 5th order B\"acklund chart.

\subsection*{Acknowledgements}

The f\/inancial support of G.N.F.M.-I.N.d.A.M., I.N.F.N.\ and Sapienza University of Rome, Italy are gratefully acknowledged. C.~Schiebold wishes also to thank S.B.A.I.\ Dept.\ and Sapienza University of Rome for the kind hospitality. The authors wish to thank the referees who carefully read this work: their comments were of help in improving its presentation.

\pdfbookmark[1]{References}{ref}
\LastPageEnding

\end{document}